\newcommand{\cent}[0]{\mbox{\textcent}}
\newcommand{\dollar}[0]{\$}
\newtheorem{fact}{Fact}
\newcommand{\usqr}{\mathtt{USQUARE}}
\newcommand{\upower}{\mathtt{UPOWER64}}
\newcommand{\dimatwo}{\mathtt{DIMA2}}
\newcommand{\coinI}{\mathfrak{coin}_I}
\newcommand{\coinL}{\mathfrak{coin}_L}
\newcommand{\sigmastar}{\Sigma^{*}}
\newcommand{\tildesigma}{\tilde{\Sigma}}
\newcommand{\tildegamma}{\tilde{\Gamma}}
\newcommand{\tildew}{\tilde{w}}
\newcommand{\directions}{ \{ \leftarrow,\downarrow,\rightarrow \} }
\begin{document}

\title{Probabilistic verification of all languages}

\author{Maksims Dimitrijevs \and Abuzer Yakary\i lmaz}
\institute{University of Latvia, Faculty of Computing \\  Rai\c na bulv\= aris 19, R\={\i}ga, LV-1586, Latvia
\\~\\
University of Latvia,
Center for Quantum Computer Science \\  Rai\c na bulv\= aris 19, R\={\i}ga, LV-1586, Latvia
\\ ~ \\
\textit{md09032@lu.lv, abuzer@lu.lv}}


\maketitle

\begin{abstract}
	We present three protocols for verifying  all languages: (i) For any unary (binary) language, there is a log-space (linear-space) interactive proof system (IPS); (ii) for any language, there is a constant-space weak-IPS (the non-members may not be rejected with high probability); and, (iii) for any language, there is a constant-space IPS with two provers where the verifier reads the input once. Additionally, we show that uncountably many binary (unary) languages can be verified in constant space and in linear (quadratic) expected time. 
	\\ ~ \\
	\textbf{Keywords:} probabilistic verification, interactive proof systems, space complexity, all languages, uncountable sets 
\end{abstract}

\section{Introduction} 
	When using arbitary real-valued transitions, quantum and probabilistic machines can recognize uncountably many languages with bounded error \cite{ADH97,DY16A,DY17,SayY17} because it is possible to encode an infinite sequence as a transition value and then determine its $ i $-th bit by using some probabilistic or quantum experiments. 

	For a given alphabet $\Sigma$, all strings can be lexicographically ordered. Then, for a language $ L  $ defined on $\Sigma$, the membership information of all strings with respect to $ L $ can be stored as a single probability, say $ p_L $, such that the membership value of $ i $-th string can be stored at the $ O(i) $-th digit of $ p_L $. Thus, we can determine whether a given string is in $ L $ or not (with high probability) by learning its corresponding digit in $p_L$. For this purpose, we can toss a coin coming head with probability $ p_L $ exponential many times (in $ i $) and count the total number of heads to guess the corresponding digit with high probability. By using this method, we can easily show that any unary (binary) language can be recognized in linear (exponential) space. 
	
	In this paper, we present better space bounds by using interactive proof systems (IPSs). By using fingerprinting method \cite{Fre83}, we exponentially reduce the above space bounds by help of a single prover. Then, by adopting the one-way\footnote{The verifier always sends the same symbol and so the only useful communication can be done by the prover.} protocol given in \cite{CL89} for every recursive enumerable language, we present a constant-space weak-IPS  with two-way communication for every language. Here ``weak'' means that the non-members may not be rejected with high probability. In \cite{FS89}, it was shown that one-way probabilistic finite automata can simulate the computation of a Turing machine by communicating with two provers. We also modify this protocol and extend the same result for every language. 
	
	Currently we left open whether there is a constant-space IPS for every language or not. (Remark that the answer is positive for Arthur-Merlin games with constant-space quantum verifiers \cite{SayY17}.) But, we obtain some other strong results for constant-space IPSs such that the binary (unary) languages having constant-space linear-time (quadratic-time) IPSs form an uncountable set. Remark that it is also open whether constant-space probabilistic machines can recognize uncountably many languages or not \cite{DY16A}.

In the next section, we present the notations and definitions to follow the rest of the paper. Then, we present our results in two sections. The Section \ref{sec:verification-all} is dedicated to the verification of all languages, while in the Section \ref{sec:verification-uncountable} we present our results for uncountably many languages. The latter section is divided into two subsections. In Section \ref{sec:languages}, we present our constant-space protocols for two unary and one binary languages that are used in Section \ref{sec:verifications}. In Section \ref{sec:verifications}, we present our constant-space protocols verifying uncountably many languages.

\section{Background}
\label{sec:def}

We assume the reader is familiar with the basics of complexity theory and automata theory. We denote the left and the right end-markers as $ \cent $ and $ \dollar $, and the blank symbol as $\#$. The input alphabet not containing symbols $\cent$ and $\dollar$ is denoted by $ \Sigma $, $ \tildesigma $ denotes the set $ \Sigma \cup \{ \cent,\dollar \} $, the work tape alphabet not containing symbol $\#$ is denoted by $ \Gamma $, $ \tildegamma $ denotes the set $ \Gamma \cup \{ \# \} $, and the communication alphabet is denoted by $\Upsilon $. $ \sigmastar $ denotes set of all strings (including the empty string ($\varepsilon$)) defined over $\Sigma$. We order the elements of $ \sigmastar $ lexicographically and then represent the  $i$-th element by $ \sigmastar(i) $, where $ \sigmastar(1) = \varepsilon$. For any natural number $i$, $bin(i)$ denotes the unique binary representation and $(bin(i))^r$ denotes the reverse binary representation. For any given string $w \in \sigmastar$, $ |w| $ is its length, $ w[i] $ is its $ i $-th symbol from the left ($ 1 \leq i \leq |w| $), $ \tildew = \cent w \dollar $, and $lex(w)$ denotes the lexicographical number of $w$, such that $lex(\sigmastar(i)) = i$ for any $i>0$. Any given input, say $w$, is always placed on the input tape as $ \tildew $.

An interactive proof system (IPS) \cite{GMR89,Bab85} is composed by a prover ($P$) and a (probabilistic) verifier ($V$), denoted as pair $(P,V)$, who can communicate with each other. The aim of the verifier is to make a decision on a given input and the aim of the prover (assumed to have unlimited computational power) is to convince the verifier to make positive decision. Thus, the verifier should be able to verify the correctness of the information (proof) provided by the prover since the prover may be cheating when the decision should be negative.

In this paper, we focus on memory-bounded verifiers \cite{DS92} and our verifiers are space--bounded probabilistic Turing machines. The verifier has two tapes: the read-only input tape and the read/write work tape. The communication between the prover and verifier is done via a communication cell holding a single symbol. The prover can see only the given input and the symbols written on the communication cell by the verifier. The prover may know the program of the verifier but may not know which probabilistic choices are done by the verifier. Since the outcomes of probabilistic choices are hidden from the prover, such IPS is called private-coin. If the probabilistic outcomes are not hidden (sent via the communication channel), then it is called public-coin, i.e. the prover can have complete information about the verifier during the computation. Public-coin IPS is also known as Arthur-Merlin games \cite{Bab85}.

A space-bounded probabilistic Turing machine (PTM) verifier is a space-bounded PTM extended with a communication cell. Formally, a PTM verifier $ V $ is a 8-tuple
\[
	(S,\Sigma,\Gamma,\Upsilon,\delta,s_1,s_a,s_r),
\]  
where, S is the set of states, consisting of three disjoint sets: $S_r$ is the set of reading states, $S_c$ is the set of communicating states, and $S_h$ is the set of halting states; $s_1 \in S_r$ is the initial state, $ s_a \in S_h $ and $ s_r \in S _h$ ($s_a \neq s_r$) are the accepting and rejecting states, respectively, and $ \delta $ is the transition function composed by $ \delta_c $ and $\delta_r$ that are responsible for the transitions when in a communicating state and in a reading state, respectively. There is no transition from $s_a$ or $s_r$ since when V enters a halting state ($s_a $ or $s_r$), the computation is terminated.

When in a communicating state, say $ s \in S_c $, $V$ writes symbol $ \tau_s \in \Upsilon $ on the communication cell and the prover writes back a symbol, say $ \tau \in \Upsilon $. Then, $V$ switches to state $ s' = \delta_c(s,\tau) \in S $.

When in a reading state, $V$ behaves as an ordinary PTM:
\[
\delta_r: S_r \times \tildesigma \times \tildegamma \times S \times \tildegamma \times \directions \times \directions \rightarrow [0,1].
\] 
That is, when $V$ is in reading state $ s \in S_r $, reads symbol $ \sigma \in \tildesigma $ on the input tape, and reads symbol $ \gamma \in \tildegamma $ on the work tape, it enters state $ s' \in S $, writes $ \gamma' \in \tildegamma $ on the cell under the work tape head, and then the input tape head is updated with respect to $ d \in \directions $ and the work tape head is updated with respect to $ d' \in \directions $ with probability
\[
	\delta(s,\sigma,\gamma,s',\gamma',d,d'),
\]
where ``$ \leftarrow $'' (``$\downarrow$'' and ``$\rightarrow$'') means the head is moved one cell to the left (the head does not move and the head is moved one cell to the right). To be a well-formed PTM, the following condition must be satisfied: for each triple $(s,\sigma,\gamma) \in S_r \times \tildesigma \times \tildegamma $,
\[
	\sum_{s' \in S,\gamma' \in \tildegamma,d \in \directions,d' \in \directions} \delta(s,\sigma,\gamma,s',\gamma',d,d') = 1.
\]
In other words, all outgoing transitions for the triple $ (s,\sigma,\gamma) $ have total probability of 1.

The space used by $V$ on $w$ is the number of all cells visited on the work tape during the computation with some non-zero probability. The verifier $V$ is called to be $ O(s(n)) $ space bounded machine if it always uses $ O(s(n)) $ space on any input with length $n \geq 0$.
The language $ L \subseteq \Sigma^* $ is verifiable by $V$ with error bound $\epsilon <\frac{1}{2} $ if
\begin{enumerate}
	\item there exists an honest prover $P$ such that any $ x \in L $ is accepted by $V$ with probability at least $ 1-\epsilon $ by communicating with $P$, and,
	 \item any $ x \notin L $ is always rejected by $V$ with probability at least $ 1-\epsilon $ when communicating with any possible prover ($P$*).
\end{enumerate}

The first property is known as completeness and the second one known as soundness. Generally speaking, completeness means there is a proof for a true statement and soundness means none of the proof works for a false statement. The case when every member is accepted with probability 1 is also called as perfect completeness.

It is also said that there is an IPS ($P$, $V$) with error bound $ \epsilon $ for language $ L $. Remark that all the time and memory bounds are defined for the verifier since we are interested in the verification power of the machines with limited resources. 

We also consider so-called weak IPS, which is obtained by replacing the condition 2 above by the following condition.

\begin{itemize}
    \item [$2'$.] Any $ x \notin L $ is accepted by $V$ with probability at most $ \epsilon $ when communicating with any possible prover ($P$*).
\end{itemize}

Therefore, in weak IPSs, the computation may not halt with high probability on non-members, or, in other words, each non-member may not be rejected with high probability.

Due to communications with the prover, the program of the verifier with the possible communications is called protocol. A private-coin protocol is called one-way if the verifier always sends the same symbol to the prover. In this case, we can assume that the prover provides a single string (possible infinite) and this string is consumed in every probabilistic branch. It is also possible that this string (certificate) is placed on a separate one-way read-only tape (certificate tape) at the beginning of the computation and the verifier can read the certificate from this tape.

We also consider interactive proof systems with two provers \cite{BGKW88,FST88}. IPS with two provers is composed by two provers ($P_1$, $P_2$) and a probabilistic verifier ($V$), denoted as ($P_1$, $P_2$, $V$). The verifier has a communication channel with each prover and one prover does not see the communication with the other prover. The verifier $V$ in such IPS has different transition function $\delta_c$. When in a communicating state, say $ s \in S_c $, $V$ writes symbol $ \tau_{s1} \in \Upsilon $ on the communication cell of the first prover ($P_1$) and $ \tau_{s2} \in \Upsilon $ on the communication cell of the second prover ($P_2$), and the provers $P_1$ and $P_2$ write back symbols, say $ \tau_1 \in \Upsilon $ and $ \tau_2 \in \Upsilon $, respectively. Then, $V$ switches to state $ s' = \delta_c(s,\tau_1,\tau_2) \in S $.

There are different models of IPS with two provers. In Multi-Prover model by \cite{BGKW88} both provers collaborate such that both of them are honest, or both of them are cheating. In Noisy-Oracle model by \cite{FST88} both provers oppose each other such that at least one of them is honest, and other may be cheating. The latter model can also be formalized as a debate system (see e.g. \cite{DSY15}) where the second prover is called as refuter. In this paper we consider the IPSs with two provers working for both models equally well.

Without communication, a verifier is a PTM. For a PTM, we simply remove the components related to communication in the formal definition (and so PTM does not implement any communicating transition). When there is no communication, then we use term recognition instead of verification.

A PTM without work tape is a two--way probabilistic finite state automaton (2PFA). Any verifier without work tape is a constant-space verifier or 2PFA verifier.

A 2PFA can be extended with $ k $ integer counters - such model is called two-way probabilistic automaton with $k$ counters (2P$k$CA). At each step of computation 2P$k$CA can check whether the value of each counter is zero or not, and then,  as a part of a transition, it can update the value of each counter by a value from $ \{-1,0,1\} $.

A two-way model is called sweeping if the direction of the input head can be changed only on the end-markers. If the input head is not allowed to move to the left, then the model is called ``one-way''.

We denote the set of integers $ \mathbb{Z} $ and the set of positive integers $ \mathbb{Z}^+ $. The set $ \mathcal{I} = \{ I \mid I \subseteq \mathbb{Z^+} \} $ is the set of all subsets of positive integers and so it is an uncountable set (the cardinality is $ \aleph_1 $) like the set of real numbers ($ \mathbb{R} $). The cardinality of $ \mathbb{Z} $ or $ \mathbb{Z^+} $ is $ \aleph_0 $ (countably many). 

The membership of each positive integer for any $ I \in \mathcal{I} $ can be represented as a binary probability value:
\[
p_I = 0.x_1 0 1 x_2 0 1 x_3 0 1 \cdots x_i 0 1 \cdots,~~~~ x_i = 1 \leftrightarrow i \in I.
\]  
Similarly, the membership of each string for language $ L \subseteq \sigmastar $ is represented as a binary probability value:
\[
	p_L = 0.x_1 0 1 x_2 0 1 x_3 0 1 \cdots x_i 0 1 \cdots,~~~~ x_i = 1 \leftrightarrow \Sigma^*(i) \in L.
\]
The coin landing on head with probability $ p_I $ (resp., $p_L$) is named as $\coinI$ (resp., $\coinL$).

\section{Verification of all languages}
\label{sec:verification-all}

We start with a basic fact presented in our previous paper \cite{DY16A}.
\begin{fact}
	\label{fact:DY16A}
	\cite{DY16A} 
	Let $x=x_1 x_2 x_3 \cdots$ be an infinite binary sequence. If a biased coin lands on head with probability  $p = 0. x_1 0 1 x_2 0 1 x_3 0 1 \cdots$, then the value $x_k$ is determined correctly with probability at least $\frac{3}{4}$ after $64^k$ coin tosses, where $ x_k $ is guessed as the $ (3k+3) $-th digit of the binary number representing the total number of heads after the whole coin tosses. 
\end{fact}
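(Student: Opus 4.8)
The plan is to treat the number of heads $H$ obtained in $N = 64^k = 2^{6k}$ tosses as a binomial random variable with mean $\mu = Np = 2^{6k}p$, and to argue that with probability at least $\frac34$ the fluctuation of $H$ about $\mu$ is too small to disturb the digit in which $x_k$ is encoded. First I would pin down that digit. Writing $p = \sum_{j \ge 1} b_j 2^{-j}$ with $b_{3i-2} = x_i$, $b_{3i-1} = 0$ and $b_{3i} = 1$, multiplying by $2^{6k}$ shifts the binary point so that $x_k = b_{3k-2}$ occupies bit position $3k+2$ of $\mu$ (numbering the units bit as position $0$), i.e. the $(3k+3)$-th digit counted from the least significant one. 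The decisive structural observation is that the two digits immediately below it are the fixed pattern $0$ (position $3k+1$) and $1$ (position $3k$) coming from block $k$: if $A = \lfloor \mu \rfloor$ and $U = \lfloor A/2^{3k}\rfloor$, then the three lowest bits of $U$ are $(x_k,0,1)$, so $U \equiv 4x_k + 1 \pmod 8$.

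Next I would bound the deviation. Since $\mathrm{Var}(H) = Np(1-p) \le N/4 = 2^{6k-2}$, Chebyshev's inequality gives $\Pr[\,|H - \mu| \ge 2^{3k}\,] \le 2^{6k-2}/2^{6k} = \frac14$, so with probability at least $\frac34$ we have $|H - \mu| < 2^{3k}$. Writing $H = U\cdot 2^{3k} + r$ and using $0 \le \mu - A < 1$ together with the integrality of $H$, this bound pins the carry $c = \lfloor H/2^{3k}\rfloor - U = \lfloor r/2^{3k}\rfloor$ into $\{-1,0,1\}$.

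It then remains to read off the target digit. Bit $3k+2$ of $H$ is bit $2$ of $\lfloor H/2^{3k}\rfloor = U + c$, and from $U \equiv 4x_k + 1 \pmod 8$ a one-line check over $c \in \{-1,0,1\}$ shows $(U+c) \bmod 8 \in \{0,1,2\}$ when $x_k = 0$ and $(U+c) \bmod 8 \in \{4,5,6\}$ when $x_k = 1$; in both cases bit $2$ of $U+c$ equals $x_k$. Hence the guessed digit is correct on the event $|H - \mu| < 2^{3k}$, which carries probability at least $\frac34$, as claimed.

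The main obstacle --- and the reason for the seemingly wasteful $0\,1$ padding in the definition of $p$ --- is controlling carries. A deviation as large as $2^{3k}$ is comparable to the combined weight of the bits lying below the target digit, so without a guaranteed buffer it could propagate a carry all the way into position $3k+2$ and flip the guess. The padding forces $U \equiv 1$ or $5 \pmod 8$, which is exactly the residue condition that makes bit $2$ immune to a $\pm 1$ carry; thus the whole argument collapses to the elementary modular check above, and the only quantitative input needed is the Chebyshev bound calibrated so that a fluctuation of size $2^{3k}$ occurs with probability at most $\frac14$.
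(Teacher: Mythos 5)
Your proof is correct: the Chebyshev bound calibrated to a deviation of $2^{3k}$, combined with the observation that the $01$ padding forces $U \equiv 4x_k+1 \pmod 8$ and hence makes bit $3k+2$ immune to a carry of $\pm 1$, is exactly the argument behind this fact. The paper itself states it without proof (deferring to the cited reference), and your reconstruction matches the standard argument from that source, so there is nothing to flag.
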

Then, the following results can be obtained straightforwardly.
\begin{theorem}
	\label{thm:all-unary-recognize}
	Any unary language is recognized by a linear-space PTMs with bounded error.
\end{theorem}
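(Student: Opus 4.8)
The plan is to encode the entire unary language $L$ as a single probability and then use Fact~\ref{fact:DY16A} to extract the membership bit of the input. Let me sketch the key steps.

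\medskip

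\emph{Setup.} For a unary language $L \subseteq \Sigma^*$ with $\Sigma = \{a\}$, the input is of the form $w = a^n$ for some $n \geq 0$, and its lexicographical number is simply $lex(w) = n+1$ (since $\Sigma^*(1) = \varepsilon$). The membership of every unary string is stored in the single probability $p_L = 0.x_1 0 1 x_2 0 1 x_3 0 1 \cdots$ where $x_i = 1 \leftrightarrow \Sigma^*(i) \in L$. The verifier (here just a PTM, since no prover is needed for recognition) has access to the coin $\coinL$ landing on head with probability $p_L$. So for input $a^n$ I need to determine $x_k$ where $k = n+1$.

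\medskip

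\emph{Main construction.} First I would have the machine read the input once to record $n$, which for a unary input means computing $k = n+1$ on the work tape in binary; this costs $O(\log n)$ space. By Fact~\ref{fact:DY16A}, to recover $x_k$ correctly with probability at least $\frac{3}{4}$ it suffices to toss $\coinL$ exactly $64^k$ times and read off the $(3k+3)$-th digit of the binary representation of the total head count. The machine therefore needs two counters maintained on the work tape: a \emph{toss counter} that counts down from $64^k$ (equivalently counts up to $64^k$), and a \emph{head counter} that accumulates the number of heads observed. Since $64^k = 2^{6k}$, the toss counter ranges over values up to $2^{6k}$ and hence needs about $6k = O(k) = O(n)$ bits; the head counter likewise needs $O(n)$ bits. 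Thus the total work-tape space is $O(n)$, i.e.\ linear, which is the claimed bound. After all tosses complete, the machine inspects the $(3k+3)$-th bit of the head counter and accepts iff it equals $1$, accepting with probability at least $\frac{3}{4}$ exactly when $a^n \in L$.

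\medskip

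\emph{Error amplification and the main obstacle.} The one subtlety is that Fact~\ref{fact:DY16A} only guarantees correctness with probability $\frac{3}{4}$, whereas bounded-error recognition requires an arbitrary error bound $\epsilon < \frac{1}{2}$. To fix this I would repeat the whole experiment a constant number of times, say $t$ times, and take the majority vote of the guessed bits; by a standard Chernoff-type argument the majority is correct with probability at least $1-\epsilon$ for a constant $t$ depending only on $\epsilon$. Running $t$ independent trials sequentially reuses the same $O(n)$ work-tape space (we can clear the counters between trials, keeping only a constant-size tally of votes), so the space bound remains linear. The part that requires the most care is verifying that counting up to $64^k$ and indexing the correct output digit are all implementable within $O(n)$ space and that the counters can indeed be simulated on a single work tape without blowing up the space; this is routine but is where the linear-space claim actually gets earned. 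No prover or communication is involved, so soundness/completeness reduce to the single-sided correctness guarantee of the coin experiment together with the amplification.
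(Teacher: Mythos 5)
Your proposal matches the paper's proof: both implement Fact~\ref{fact:DY16A} directly with $k=n+1$, using linear-size binary counters on the work tape to perform $64^k$ tosses of $\coinL$ and to count heads, then read off the $(3k+3)$-th bit and amplify by constant repetition. The argument is correct and takes essentially the same route.
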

\begin{proof}
	Let $ \Sigma = \{a\} $ be our alphabet. For any unary language $L \subseteq \Sigma^*$, we can design a PTM for $ L $, say $ P_L $, that uses $ \coinL $.
	
Let $ w = a^n $ be the given input for $ n \geq 0 $.  The PTM $ P_L $ implements the procedure described in Fact \ref{fact:DY16A} in a straightforward way and gives its decision accordingly, which will be correct with probability not less than $ \frac{3}{4}$. The machine only uses linear-size binary counters to implement $ 64^k $ coin tosses and to count the number of heads (for unary $ L $ and $ \coinL $, $k = n+1$). By repeating the same procedure many times (not depending on $n$), the success probability is increased arbitrarily close to 1.
\qed 
\end{proof}

\begin{remark}
	\label{rem:k-ary}
	Let $ L \subseteq \Sigma^* $ be a $k$-ary language for $ k>1 $, where $ \Sigma = \{a_1,\ldots,a_k\} $. For any given $ k $-ary string $ w \in \Sigma^* $, let $ x_l $ represent its membership bit in $ p_L $. Then, by using the exactly the same algorithm given in the above proof, we can determine $ x_l $ correctly with high probability. However, $ l $ is exponential in $ |w| $ and so the PTM uses exponential space.
\end{remark}	
	 
\begin{corollary}
	\label{cor:all-k-ary-exp}
	Any $k$-ary ($ k>1 $) language is recognized by an exponential-space PTMs with bounded error.
\end{corollary}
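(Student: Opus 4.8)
The plan is to observe that Corollary~\ref{cor:all-k-ary-exp} is an immediate consequence of Remark~\ref{rem:k-ary}, so the proof is essentially a one-line reduction to the machinery already established. First I would fix a $k$-ary language $L \subseteq \Sigma^*$ with $k>1$ and $\Sigma = \{a_1,\ldots,a_k\}$, and recall that the PTM $P_L$ constructed exactly as in the proof of Theorem~\ref{thm:all-unary-recognize} uses the coin $\coinL$, reads the digit of $p_L$ encoding the membership bit of the input, and decides correctly with probability at least $\frac{3}{4}$ (boosted arbitrarily close to $1$ by independent repetitions).

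The one genuine point to address is the space bound. On input $w$, the relevant membership bit is $x_l$ where $l = lex(w)$, the lexicographical index of $w$ in $\sigmastar$. I would note that since there are $k^j$ strings of length $j$ over a $k$-letter alphabet, the number of strings of length at most $|w|$ is $\sum_{j=0}^{|w|} k^j = \frac{k^{|w|+1}-1}{k-1}$, so $l = lex(w)$ is $\Theta(k^{|w|})$, i.e.\ exponential in the input length $|w|=n$. By Fact~\ref{fact:DY16A}, the procedure tosses $\coinL$ exactly $64^{l}$ times and maintains binary counters for the toss count and the head count; these counters each hold numbers of magnitude at most $64^{l}$, hence require $O(\log(64^{l})) = O(l)$ bits. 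Since $l = \Theta(k^{n})$, the counters occupy $O(k^{n})$ cells, which is exponential in $n$. The bounded repetitions needed for amplification do not change this asymptotic bound, so $P_L$ is an exponential-space PTM recognizing $L$ with bounded error.

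I do not expect any serious obstacle here: the correctness of the decision procedure is inherited verbatim from Fact~\ref{fact:DY16A} and the proof of Theorem~\ref{thm:all-unary-recognize}, and the only new ingredient is the elementary estimate $lex(w) = \Theta(k^{|w|})$ together with the fact that writing out the counters for $64^{lex(w)}$ tosses costs $O(lex(w))$ space. The closest thing to a subtlety is simply making explicit that the logarithm of the toss count (the space cost) is linear in $l$ while $l$ itself is exponential in $n$, which is exactly the observation already flagged in Remark~\ref{rem:k-ary}; everything else is routine bookkeeping.
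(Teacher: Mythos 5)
Your proposal is correct and follows essentially the same route as the paper: Corollary~\ref{cor:all-k-ary-exp} is stated there as an immediate consequence of Remark~\ref{rem:k-ary}, which likewise runs the algorithm of Theorem~\ref{thm:all-unary-recognize} and observes that $l = lex(w)$ is exponential in $|w|$, so the $O(l)$-bit counters for the $64^{l}$ coin tosses cost exponential space. Your explicit estimate $lex(w) = \Theta(k^{|w|})$ just spells out what the paper leaves implicit.
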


When interacting with a prover, we can obtain exponentially better space bounds. For this purpose, we use probabilistic fingerprint method: For comparing two $ t $-bit numbers, say $ n_1 $ and $n_2$, we can randomly pick a $ (c\log t) $-bit  prime number $ p $ for some positive integer $ c $ and compare $ n_1'=n_1 \mod p $ with $ n_2' = n_2 \mod p $. It is known that \cite{Fre83} if $ n_1=n_2 $, then clearly $ n_1' = n_2' $, and, if $ n_1 \neq n_2 $, then  $ n_1' \neq n_2' $ with high probability depending on number $ c $ as specified below.

\begin{fact}	
	\label{fact:Fre83}
	\cite{Fre83}
	Let $P_1 (m)$ be the number of prime numbers not greater than $2^{ \lceil log_2 m \rceil }$, $P_2 (l,N',N'')$ be the number of prime numbers not greater than $2^{ \lceil log_2 l \rceil }$ and dividing $|N'-N''|$, and $P_3 (l,m)$ be the maximum of $P_2 (l,N',N'')$ over all $N'<2^m$, $N'' \leq 2^m$, $N' \neq N''$.
		Then, for any $\epsilon > 0$, there is a natural number $c$ such that $\lim_{m\to\infty} \frac{P_3 (cm,m)}{P_1 (cm)} < \epsilon$.  
\end{fact}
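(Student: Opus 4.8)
The plan is to reduce the statement to two classical estimates on the distribution of primes: a lower bound on the prime-counting function (for the denominator) and a sharp upper bound on the number of distinct prime factors of a bounded integer (for the numerator). First I would unwind the definitions. Since $cm \le 2^{\lceil \log_2 (cm)\rceil} < 2cm$, the denominator $P_1(cm)$ counts primes up to a quantity of order $cm$, so by Chebyshev's lower estimate (or the prime number theorem) there is a constant $C_1 > 0$ with $P_1(cm) \ge C_1 \, cm/\ln(cm)$ for all large $m$; here I use that for fixed $c$ one has $\ln(cm) = \ln c + \ln m = \Theta(\ln m)$.

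For the numerator, I would observe that for any admissible pair $N' < 2^m$, $N'' \le 2^m$ with $N' \ne N''$, the difference $D = |N'-N''|$ is a positive integer with $D < 2^m$. Hence $P_2(cm,N',N'')$, which counts only primes dividing $D$, is at most the total number $\omega(D)$ of distinct prime divisors of $D$, so that $P_3(cm,m) \le \max_{0 < D < 2^m} \omega(D)$. The crucial step is to bound this maximum sharply: if $D$ has $k$ distinct prime factors then $D \ge \prod_{i=1}^{k} p_i$, the product of the first $k$ primes, whose logarithm is the Chebyshev function $\theta(p_k) = \Theta(k \ln k)$. Thus $k \ln k = O(\ln D) = O(m)$, which gives $\omega(D) = O(m/\ln m)$ with an implied constant that is, importantly, independent of $c$. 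Consequently $P_3(cm,m) \le C_2 \, m/\ln m$ for some absolute constant $C_2$ and all large $m$.

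Combining the two bounds yields
\[
\frac{P_3(cm,m)}{P_1(cm)} \le \frac{C_2 \, m/\ln m}{C_1 \, cm/\ln(cm)} = \frac{C_2}{C_1\, c} \cdot \frac{\ln(cm)}{\ln m},
\]
and for fixed $c$ the right-hand side tends to $C_2/(C_1\,c)$ as $m \to \infty$. Since $C_1$ and $C_2$ do not depend on $c$, it then suffices to fix any natural number $c > C_2/(C_1 \epsilon)$; the limiting ratio is then at most $C_2/(C_1 c) < \epsilon$, which is exactly the assertion.

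The main obstacle I anticipate is precisely this need for sharpness in the numerator. The naive bound ``a number below $2^m$ has at most $m$ prime factors'' (each factor being at least $2$) is too weak: it would give a ratio of order $m/(cm/\ln(cm)) = \ln(cm)/c$, which diverges as $m \to \infty$ for fixed $c$. One genuinely needs the refinement $\omega(D) = O(\ln D/\ln\ln D)$ coming from the primorial estimate, so that the numerator is a factor of about $\ln m$ smaller and the linear-in-$c$ growth of $P_1$ is able to dominate. The remaining work is routine bookkeeping with the ceilings and the $\Theta(\ln m)$ equivalences.
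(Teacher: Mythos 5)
Your argument is correct. Note that the paper itself gives no proof of this statement: it is presented as a Fact imported from \cite{Fre83}, so there is no in-paper argument to compare against. Your two ingredients are exactly the right ones: a Chebyshev-type lower bound $P_1(cm) \ge \pi(cm) \ge C_1\, cm/\ln(cm)$ for the denominator, and the primorial bound $\omega(D) = O(\ln D/\ln\ln D)$ (hence $P_3(cm,m) = O(m/\ln m)$ with a constant independent of $c$) for the numerator; you also correctly diagnose that the naive bound $\omega(D) \le m$ is too weak, since it would make the ratio diverge like $\ln(cm)/c$. The only cosmetic points are that $|N'-N''|$ can equal $2^m$ rather than being strictly below it (irrelevant, since $\ln D \le m\ln 2$ still holds), and that your estimate controls the $\limsup$ rather than establishing that the limit exists --- but that is a defect of the statement's phrasing, not of your argument.
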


\begin{theorem}
	\label{thm:all-unary}
	Any unary language $ L \subseteq \{a\}^* $ is verified by a log-space PTM with bounded error.
\end{theorem}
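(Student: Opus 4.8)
The plan is to re-use the coin-tossing experiment of Fact~\ref{fact:DY16A} exactly as in the proof of Theorem~\ref{thm:all-unary-recognize}, but to offload the two large counters --- the number of tosses and the number of heads, each of which may reach $64^{k}=2^{6k}$ with $k=n+1$ and therefore needs $\Theta(n)$ bits --- to the prover, while the verifier keeps only short \emph{fingerprints} of them. First I would have the verifier read $\tildew=\cent a^{n}\dollar$ and store $k$ (equivalently $6k$) in binary, using $O(\log n)$ space. It then samples a random prime $p$ whose bit-length is $\lceil\log_{2}(cm)\rceil$ with $m=6k$, for the constant $c$ supplied by Fact~\ref{fact:Fre83} so that the resulting fingerprint error falls below any desired $\epsilon$; since $p$ has only $O(\log n)$ bits it can be drawn (sample, test, retry) and stored in $O(\log n)$ space.

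In the main loop the verifier repeatedly tosses $\coinL$ and, in parallel, maintains two residues modulo $p$: a toss count $T$, incremented at every toss, and a head count $R\equiv H\pmod p$, incremented only on heads. After each toss it relays the outcome to the prover, whose unbounded memory lets it keep the exact toss count and the exact head count $H$; the prover signals when $64^{k}$ tosses have been made. Both residues occupy $O(\log n)$ space. When the prover signals stop, the verifier computes $64^{k}\bmod p=2^{6k}\bmod p$ by fast modular exponentiation (repeated squaring modulo the short $p$, driven by the $O(\log n)$-bit number $6k$) and checks $T=2^{6k}\bmod p$; it then asks the prover for the binary expansion of $H$, streams it while recomputing $H\bmod p$ by Horner's rule and comparing the result against $R$, and simultaneously extracts the $(3k+3)$-th digit with a position counter of $O(\log n)$ bits. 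The verifier accepts iff both fingerprint tests pass and that digit is $1$, and the whole procedure is repeated a constant number of times to push the error below $\epsilon$; crucially $H$ itself is never stored by the verifier, so every number it manipulates has $O(\log n)$ bits.

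For completeness, the honest prover keeps correct counts, so both fingerprint tests pass and the verifier reads the genuine $(3k+3)$-th digit of $H$, which by Fact~\ref{fact:DY16A} equals the membership bit $x_{k}$ of $a^{n}$ with probability at least $\frac{3}{4}$. For soundness one observes that the prover's only freedoms are the stopping time and the reported value of $H$, and that its entire view (the relayed coin outcomes and the stop query) is independent of $p$. Hence any stopping time $t\neq 64^{k}$ makes the test $t\equiv 64^{k}\pmod p$ fail, and any reported $\hat H\neq H$ makes $\hat H\equiv H\pmod p$ fail; conditioning on the prover's (fixed) choice and using the randomness of the secret prime, Fact~\ref{fact:Fre83} bounds each failure-to-detect probability by $\epsilon$. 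Thus, up to these small fingerprint errors, the verifier effectively reads the true digit of the true $64^{k}$-toss experiment, which for a non-member is $0$ with probability at least $\frac{3}{4}$.

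The step I expect to be the crux is exactly this soundness argument, and in particular making precise that relaying the private coin outcomes does not help a cheating prover: the prover can corrupt only the \emph{aggregate} toss and head counts, both of which are pinned down modulo a random prime it cannot see, so that Fact~\ref{fact:Fre83} applied in the worst case over the prover's choices controls the cheating probability. A secondary point to verify carefully is that sampling $p$, the modular exponentiation $2^{6k}\bmod p$, and the streamed Horner evaluation of $H\bmod p$ all fit in $O(\log n)$ space, which they do because every quantity other than $H$ has only $O(\log n)$ bits and $H$ passes through the verifier one bit at a time.
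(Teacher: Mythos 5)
Your overall plan is the same as the paper's: run the Fact~\ref{fact:DY16A} experiment, let the prover hold the exact toss and head counts, and keep only residues modulo a random $O(\log n)$-bit prime, checked at the end via Fact~\ref{fact:Fre83}. However, there is a genuine gap in your soundness argument, and it is exactly at the step you flag as the crux. Your protocol lets the prover decide when the $64^{k}$ tosses are ``done'' and how long the reported expansion of $H$ is, with no bound enforced by the verifier. Two problems follow. First, a cheating prover that never signals stop (or streams an infinite expansion of $H$) makes the verifier run forever, so a non-member is rejected with probability $0$; this only yields a \emph{weak} IPS, not the strong one claimed. Second, and more subtly, your appeal to Fact~\ref{fact:Fre83} is invalid for unbounded deviations: the fact only bounds the fraction of bad primes when \emph{both} compared numbers are at most $2^{m}$. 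A prover that stops at time $t=64^{k}+\prod q$, the product ranging over all primes $q$ below the verifier's sampling bound, passes the test $t\equiv 64^{k}\pmod{p}$ for \emph{every} candidate prime $p$, so the fingerprint gives no protection at all against a sufficiently long run; the same attack applies to an over-long reported $H$.

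The paper closes this hole with a ``halting walk'': in each round of the toss phase the verifier increments a constant-size counter with probability $64^{-k}$ (implemented by a pass over the input) and rejects once it reaches $8$. By Chebyshev this forces termination with probability $1$ and guarantees that, except with probability less than $\frac{16}{81}$, the interaction is cut off before $16\cdot 64^{k}$ rounds, so the quantities entering the fingerprint comparison stay below $2^{6k+4}$ and Fact~\ref{fact:Fre83} applies with $m=6k+4$ (your choice $m=6k$ must be enlarged accordingly); it also explicitly checks that the reported $bin(t)$ has length at most $6k+1$. You need some such verifier-controlled probabilistic clock --- the rest of your construction (one prime versus the paper's two, modular exponentiation for $64^{k}\bmod p$, Horner evaluation of the streamed $H$) is fine and fits in $O(\log n)$ space as you argue.
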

\begin{proof}
	The protocol is two-way. Let $ w = a^n $ be the given input for $ n>0 $. (The decision on the empty string is given deterministically.) Remember that the membership bit of $ a^n $ is $ x_{n+1} $ in $ p_L $. Let $k = n+1 \geq 2$. We pick a value of $ c $ (see Fact \ref{fact:Fre83}) satisfying the error bound $ \frac{1}{8} $.

	The protocol has three phases. In the first phase, there is no communication. The verifier picks two random $ (c \cdot (4 \cdot \log k)) $-bit prime numbers, say $ p_1 $ and $ p_2 $, and then, it calculates and stores $ r_1 = 64^k \mod p_1 $ in binary on the work tape. The verifier also prepares two binary counters $ C_1 $ and $ C_2 $ for storing the total number of coin tosses in modulo $ p_1 $ and the total number of heads in modulo $ p_2 $, respectively, and one ``halting'' counter $ C_h = 0 $.
	
	In the second phase, the verifier asks from the prover to send $ a^{64^k}b $. Once the verifier receives symbol $ b $, the communication is ended in this phase. Therefore, we assume that the prover sends either the finite string $ y=a^mb $ for some $  m\geq 0 $ or an infinite sequence of $ a $'s. 
	
	For each $ a $ received from the prover, the verifier reads the whole input and adds one to $C_h$ with probability $ \left( \frac{1}{64} \right)^{k} $. We call it as halting walk. If $C_h = 8$, the verifier terminates the computation and rejects the input. If the computation is not terminated, the verifier tosses $ \coinL $, sends the result to the prover, and increases $C_1$ by 1. If the result is head, the verifier also increases $ C_2 $ by 1. When the second part is ended, the verifier checks whether previously calculated $ r_1 $ is equal to $ r'_1 = m \mod p_1 $, stored on $ C_1 $. If they are not equal, then the input is rejected. In other words, if the prover does not send $ 64^k $ $a$'s, then the verifier detects this with probability at least $ \frac{7}{8} $.
	
	Let $ r_2 $ be the binary value stored on $ C_2 $ and $ t $ be the total number of heads obtained in the second phase. In the third phase, the verifier asks from the prover to send $ (bin(t))^r $ (the least significant bits are first). By using the input head, the verifier easily reads the $ (3k+3) $-th bit of $ bin(t) $, say $ x'_{k} $, and also checks whether the length of $ bin(t) $ does not exceed $ (6k+1) $. Meanwhile, the verifier also calculates $ r'_2 = t \mod p_2 $. If the length of $bin(t) $ is greater than $ 6k+1 $ or $ r_2 \neq r'_2 $, then the input is rejected. In other words, if the prover does not send $ bin(t) $, then the verifier can catch it with probability $ \frac{7}{8} $. At the end of third phase, the verifier accepts the input if $ x'_{k} $ is 1, and, rejects it if $ x'_{k} $ is 0.
    
    With respect to Chebyshev's inequality, the value of $C_h$ reaches 8 after more than $16 \cdot 64^k$ $ a $'s with probability
    \[
    	Pr[|X-E[X]| \geq 9] \leq \frac{(\frac{1}{64^k}) \cdot (1-\frac{1}{64^k}) \cdot 16 \cdot 64^k}{9^2} < \frac{16}{81},
    \]
     where $E[X]$ is expected value of $C_h$. This bound is important, since, for $ m > 16 \cdot 64^k$, Fact \ref{fact:Fre83} cannot guarantee the error bound $\frac{1}{8}$. (Remember that prime numbers $ p_1 $ and $ p_2 $ do not exceed $2^{c \cdot (4 \cdot \log k)} \geq 2^{c \cdot (\log (6 \cdot k + 4))}$ for $k \geq 2$.)
	
	If $ w $ is not a member, then the accepting probability can be at most $ \frac{209}{648}$.
	\begin{itemize}
		\item  If $ m \neq 64^k  $, then the input is rejected with probability at least $ \frac{7}{8} - \frac{16}{81}$, and so the accepting probability cannot be greater than $ \frac{209}{648} $. 
		\item Assume that $ m=64^k $. If the prover does not send $ (bin(t))^r $, then the input is rejected with probability $ \frac{7}{8} $, and so the accepting probability cannot be greater than $ \frac{1}{8} $. 
		\item Assume that $ m=64^k $ and the verifier sends $ (bin(t))^r $, then the input is accepted with probability at most $ \frac{1}{4} $.
	\end{itemize}	
	The expected running time for the non-members is exponential in $ n $ due to the halting walks.
	
	If $ w $ is a member, then the honest prover sends all information correctly, and the verifier guesses $ x_{n+1} $ correctly with probability at least $ \frac{3}{4} $ if the computation is not terminated by halting walks. The probability of halting the computation (and rejecting the input) in the second phase is 
	\[
		Pr[|X-E[X]| \geq 7] \leq \frac{(\frac{1}{64^k}) \cdot (1-\frac{1}{64^k}) \cdot 64^k}{7^2} < \frac{1}{49}.
	\]
	Therefore, the verifier accepts $ w $ with probability at least $ \frac{143}{196} $. The expected running time for members is also exponential in $ n $.
	
	The verifier uses $ O(\log n) $ space and the success probability is increased by repeating the same algorithm.
\qed
\end{proof}

Due to Remark \ref{rem:k-ary}, we can follow the same result also for $ k $-ary languages with exponential increase in time and space.  

\begin{corollary}
	\label{cor:all-k-ary-linear}
	Any $k$-ary ($k>1$) language $ L \subseteq \{a_1,\ldots,a_k\}^* $ is verified by a linear-space PTM with bounded error.
\end{corollary}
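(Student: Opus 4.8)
The plan is to reuse the protocol of Theorem~\ref{thm:all-unary} essentially verbatim, changing only the index that drives the construction: instead of the input length, I would use the lexicographic number of the input. For a $k$-ary input $w$ with $|w|=n$, Remark~\ref{rem:k-ary} tells us that the membership bit of $w$ sits at position $l=lex(w)$ in $p_L$, and that $l$ is exponential in $n$, so $bin(l)$ has only $O(n)$ bits. This single observation is what turns the log-space unary construction into a linear-space $k$-ary one.

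First I would have the verifier compute $l=lex(w)$ and write $bin(l)$ on the work tape. This is a routine base-$k$ counting computation over the input, and since $l=O(k^n)$ the string $bin(l)$ occupies $O(n)$ cells, i.e. linear space. I then play the role of the unary index $k=n+1$ with $l$ in the sense of Fact~\ref{fact:DY16A}, so that the target digit is the $(3l+3)$-th digit of the number of heads of $\coinL$ after $64^l$ tosses. (The finitely many shortest strings, in particular $\varepsilon=\sigmastar(1)$, are decided deterministically.)

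Then I would run the three-phase protocol of Theorem~\ref{thm:all-unary} with every occurrence of the index replaced by $l$ and with $\coinL$ as the coin: the verifier draws two random $(c\cdot 4\log l)$-bit primes $p_1,p_2$, computes $r_1=64^l \bmod p_1$ by fast modular exponentiation driven by $bin(l)$, maintains the counters $C_1,C_2$ modulo $p_1,p_2$ and the constant-range halting counter $C_h$, asks the prover for $a^{64^l}b$ and then for $(bin(t))^r$, and performs halting walks of success probability $\left(\frac{1}{64}\right)^{l}$ per received $a$. Every quantity here is either a constant or an $O(\log l)=O(n)$-bit number, so the whole computation stays in linear space. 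The completeness and soundness analyses are unchanged: the Chebyshev bounds, the fingerprinting guarantee of Fact~\ref{fact:Fre83}, and the $\frac{3}{4}$ guarantee of Fact~\ref{fact:DY16A} all depend only on the magnitude of the index and hold for any index $\geq 2$, so substituting $l$ yields exactly the same error bounds, which are then amplified by repetition.

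The main obstacle I expect is purely a bookkeeping one: verifying that all the arithmetic — forming $bin(l)$, the modular exponentiation $64^l \bmod p_1$, and the modular counters — can indeed be carried out in $O(n)$ space and that no step secretly needs to store a number as large as $64^l$, nor $l$ in unary. The key point to check is that $l$ is only ever manipulated through its $O(n)$-bit binary encoding (for the exponentiation, and for reading off the $(3l+3)$-th and $(6l+1)$-th positions against the prover's stream), never materialized, so the $64^l$-length and $l$-length objects live only in the prover's messages and are consumed one symbol at a time. Granting this, the running time is exponential in $l$, and hence doubly exponential in $n$, matching the ``exponential increase in time and space'' promised by Remark~\ref{rem:k-ary}.
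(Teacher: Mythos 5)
Your proposal is correct and follows essentially the same route as the paper, which proves this corollary simply by invoking Remark~\ref{rem:k-ary}: run the protocol of Theorem~\ref{thm:all-unary} with the index $l=lex(w)$, which is exponential in $|w|$, so the $O(\log l)$ space bound becomes linear and the running time becomes exponential in $l$. Your additional bookkeeping about keeping $l$ only in its $O(n)$-bit binary form is a sound and welcome elaboration of what the paper leaves implicit.
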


Currently we do not know any better space bound for (strong) IPSs. On the other hand, by using more powerful proof systems, we can reduce space complexity to $ O(1) $. We first present a 1P4CA algorithm for any language.

\begin{theorem}
	\label{thm:four-counter-every-language}
	Any $k$-ary ($ k \geq 1 $) language $L$ is recognized by a 1P4CA $P_L$  with bounded error.
\end{theorem}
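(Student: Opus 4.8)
The plan is to have $P_L$ carry out the sampling procedure of Fact~\ref{fact:DY16A} for the coin $\coinL$, using its four integer counters as the only unbounded memory and reading the input from left to right exactly once. Fix the input $w$ and let $l=lex(w)$, so that the membership of $w$ in $L$ is precisely the bit $x_l$ of $p_L$ (i.e. $x_l=1 \Leftrightarrow w\in L$). By Fact~\ref{fact:DY16A}, if $P_L$ tosses $\coinL$ exactly $64^{l}$ times and then inspects the $(3l+3)$-th binary digit of the total number of heads $t$, it recovers $x_l$ with probability at least $\frac34$. Running this experiment a constant number of times and taking the majority of the recovered bits drives the error below any prescribed $\epsilon<\frac12$, which yields recognition of $L$ with bounded error.

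It remains to realize the quantities $l$, $64^{l}$, $t$, and the selected digit on a one-way $4$-counter machine, which I would do in four phases.
\begin{enumerate}
\item \emph{Indexing.} During the single left-to-right scan, compute $l=lex(w)$ in a counter. Since the shortlex index over a $k$-ary alphabet coincides with $1$ plus the bijective base-$k$ value of $w$, this is Horner's rule: on each input symbol replace the running value $v$ by $k\cdot v + d$, where $d\in\{1,\dots,k\}$ is the rank of the symbol. Multiplication by the constant $k$ and the addition of $d$ are performed by transferring one counter into another, so two counters suffice here.
\item \emph{Exponentiation.} From a preserved master copy of $l$, compute $64^{l}$ by starting from $1$ and multiplying by $64$ exactly $l$ times; this uses one counter as loop index, one for the running product, and one as scratch for each multiplication by the constant $64$.
\item \emph{Sampling.} Toss $\coinL$ exactly $64^{l}$ times by counting the tosses down on the counter holding $64^{l}$, and accumulate the number of heads $t$ on a second counter.
\item \emph{Extraction.} Using the stored $l$, recover the $(3l+3)$-th bit of $t$ by halving $t$ exactly $3l+2$ times (integer division by $2$, again a counter-to-counter transfer) and then testing the parity of the result; the machine guesses $x_l=1$ iff this bit is $1$.
\end{enumerate}

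The counter budget is the crux. I would argue that at no moment are more than four counters simultaneously alive: one counter permanently stores the master value $l$, so that every phase, and every repetition of the whole experiment, can be restarted without rescanning the input (which the one-way head forbids), while the remaining three are freed and reused as loop index, product/countdown, halving-counter, and scratch as the phases succeed one another. This would be made explicit by checking, phase by phase, that the transient operations needed (copy-with-preservation, multiply-by-constant, divide-by-two) never require more than three auxiliary counters at once. Finally, because the number of repetitions used for amplification is a fixed constant independent of $w$, the running tally for the majority vote is kept in the finite control and costs no counter.

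The main obstacle is therefore not the probabilistic analysis, which is inherited verbatim from Fact~\ref{fact:DY16A}, but this bookkeeping: showing that computing the index $l$, exponentiating it, counting the doubly-exponentially many coin tosses, and then isolating a single far-out bit of the head count can all be interleaved within four counters and a single one-way pass over the input. Once the phase-by-phase liveness of the counters is established, the bounded-error recognition follows immediately by amplification.
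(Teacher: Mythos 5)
Your proposal is correct and follows essentially the same route as the paper: compute $l=lex(w)$ by Horner's rule during the one-way scan, build $64^{l}$ by repeated multiplication, toss $\coinL$ that many times, and read off the $(3l+3)$-th bit of the head count via Fact~\ref{fact:DY16A}, with the counter bookkeeping being the only delicate point. The one implementation difference is in the extraction step: the paper precomputes the block length $4\cdot 8^{l}=2^{3l+2}$ in a counter alongside $64^{l}$ and flips a finite-control bit after each block of that many heads (so the target bit is obtained online and no post-processing of $t$ is needed), whereas you store $t$ and halve it $3l+2$ times afterwards -- both variants fit within four counters.
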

\begin{proof}
Let $ \Sigma = \{a_1,\ldots,a_k\} $ be the input alphabet with $k$ symbols, and for each $1 \leq i \leq k$, $lex(a_i) = i+1$. Let $w = w[1] w[2] \cdots w[n-1] w[n]$ ($|w| = n$) be the given input string. If $n=0$, then $ lex(w)=1 $. If $ n>0 $, then $ lex(w) $ is calculated as follows:
\[
lex(w) = 1 + \sum_{i=1}^{n} k^{i-1} + \sum_{i=1}^{n} (lex(w[i]) - 2) \cdot k^{n-i},
\]
since there are total $\sum_{i=1}^{n} k^{i-1}$ strings with length less than $n$, and there are total $\sum_{i=1}^{n} (lex(w[i]) - 2) \cdot k^{n-i}$ strings with length $n$ that are coming before $ w $ in the lexicographic order.

If $ w $ is the empty string, then the decision is given deterministically. In the following part, we assume that $ n>0 $.
Let $ C_j $ ($ 1 \leq j \leq 4 $) represent 
the value of $j$-th counter. At the beginning of computation $C_1 = C_2 = C_3 = C_4 = 0$. Firstly, $ P_L $ reads $ w $ and sets $ C_1 = lex(w) $ as follows. $P_L$ reads $w[1]$ and sets $C_1 = lex(w[1])$. After that, for each $ m \in \{2, \ldots,n\} $, $P_L$ reads $w[m]$ and multiplies the value $C_1$ by $k$ and increases it by $(2-k) + (lex(w[m])-2)$ with the help of other counters. 

  We claim that after reading $w$, $C_1 = lex(w)$. We prove this claim by induction on $m$. The basis, when $m=1$, is trivial, since $C_1 = lex(w[m])$.

Suppose that for some $m > 0$:
\[
C_1 = lex(w[1] w[2] \cdots w[m]) = 1 + \sum_{i=1}^{m} k^{i-1} + \sum_{i=1}^{m} (lex(w[i]) - 2) \cdot k^{m-i}
\]
Then $P_L$ reads $w_{[m+1]}$, and $C_1$ is updated to
\footnotesize
\begin{eqnarray*}
C_1 & = & (1 + \sum_{i=1}^{m} k^{i-1} + \sum_{i=1}^{m} (lex(w[i]) - 2) \cdot k^{m-i}) \cdot k + (2-k) + (lex(w[m+1])-2)
\\
	& = & k + \sum_{i=1}^{m} k^i + \sum_{i=1}^{m} (lex(w[i]) - 2) \cdot k^{m+1-i} + (2-k) + (lex(w[m+1])-2)
\\
	& = & 2 + \sum_{i=2}^{m+1} k^{i-1} + \sum_{i=1}^{m} (lex(w[i]) - 2) \cdot k^{m+1-i} + (lex(w[m+1])-2)
\\
	& = & 1 + \sum_{i=1}^{m+1} k^{i-1} + \sum_{i=1}^{m+1} (lex(w[i]) - 2) \cdot k^{m+1-i}.
\end{eqnarray*}
\normalsize
Thus, $ C_1 = lex(w[1] \cdots w[m] w[m+1]) $ and so the claim is proven.

After reading $ w $, $ P_L $ stays on the right end-marker and do the rest of the computation without moving the input head. Let   $ l = C_1 = lex(w) $. $P_L$ decreases $C_1$ by 1, and sets $C_2 = 64$ and $C_3 = 4 \cdot 8$. Then, until $ C_1 $ hits zero, $P_L$ decreases $ C_1 $ by 1, and then multiplies $C_2$ by 64 and $C_3$ by 8 with the help of 4th counter. When $C_1 = 0$, we have $C_2 = 64^l$, $C_3 = 4 \cdot 8^l$, and $C_4 = 0$.

After that $P_L$ tosses $\coinL$ $ 64^l $ times and guesses the value of $ x_l $ in $p_L$ by using the number of total heads. Remember that $ x_l = 1 $ if and only if $ w \in L $. Let $ H $ be the total number of heads in binary. Due to Fact \ref{fact:DY16A}, the ($3l+3$)-th bit of $ H $ is $ x_l $ with probability at least $ \frac{3}{4} $. Remark that when counting the total number of heads, the ($3l+3$)-th bit of $ H $ is flipped after each $ T = 2^{3l+2} = 4 \cdot 8^l $ number of heads. Thus, by switching the values of the 3rd and the 4th counters, $ P_L $ determines each block of $ T $ heads. By using its internal states, $ P_L $ keeps a candidate value for $ x_l $, say $ x'_l $. $P_L$ sets $ x'_l = 0 $ before the coin tosses and updates it to $ 1-x_l $ after each $ T $ number of heads. 

At the end of the coin tosses, if $ x'_l = 0 $ (resp., $ x'_l = 1 $), the input is rejected (resp., accepted). The decision will be correct with probability at least $ \frac{3}{4} $.
\qed
\end{proof}

It is well-known fact that any recursive language is recognized by a deterministic automata with two counters \cite{Min61}. Based on this fact, Condon and Lipton \cite{CL89} proved that for any recursively enumerable language $L$, there is a one-way weak-IPS with a 2PFA verifier by presenting a protocol that simulates the computation of a 2P2CA on a given input. We extend this protocol for the verification of any language by also using two-way communication.

\begin{theorem}
	\label{thm:weak-verify}
	There is a weak-IPS $ (P,V) $ for any language $L$ where $ V $ is a sweeping PFA.
\end{theorem}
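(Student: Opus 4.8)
The plan is to bypass Minsky's construction~\cite{Min61} entirely and to build directly on Theorem~\ref{thm:four-counter-every-language}, which already gives a 1P4CA $P_L$ recognizing $L$ with bounded error while using only the biased coin $\coinL$. Since the verifier is a PFA allowed arbitrary real transition probabilities, it can realize $\coinL$ in a single transition, so $V$ will perform all of $P_L$'s coin tosses itself and keep $P_L$'s finite control in its states. The four counters, which $V$ cannot store, are delegated to the prover: using two-way communication, $V$ announces each counter operation and each zero-test that $P_L$ wants to perform, and $P$ answers with the current status of the counters, whereupon $V$ updates its simulated state exactly as $P_L$ would. Keeping the coin outcomes private (they are never sent to $P$) is what will prevent a cheating prover from biasing the membership decision, since the guessed bit $x_l$ is extracted solely from $V$'s own head count.

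The technical heart, adapted from the protocol of Condon and Lipton~\cite{CL89}, is the probabilistic verification that $P$'s counter answers are truthful. First I would ask $P$ to present the computation as a sequence of configurations with the four counter values written in unary; $V$ checks in its states that the finite-control part of every transition is legal, that consecutive configurations differ by the prescribed amount in $\{-1,0,+1\}$ per counter, and that every claimed zero answer coincides with an empty unary block. The one thing a memoryless verifier cannot do deterministically is compare two unbounded unary counter values, and for this I would use Freivalds-style probabilistic comparison~\cite{Fre83}, carried out over repeated passes during which the prover retransmits the relevant blocks and the input head reverses only at the end-markers; this is exactly what lets $V$ be taken to be sweeping. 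The deterministic phases of $P_L$ (computing $l=lex(w)$ and then $64^{l}$ and $4\cdot 8^{l}$ into counters) are certified this way, and in the final phase $P$ is used only to help $V$ recognize when $64^{l}$ tosses and each block of $4\cdot 8^{l}$ heads have occurred.

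For completeness, on $w\in L$ the honest prover supplies the genuine configuration sequence, every consistency check passes, and $V$ faithfully reproduces the computation of $P_L$; hence $w$ is accepted with the bounded error inherited from Theorem~\ref{thm:four-counter-every-language}, which is then amplified by repetition. For soundness in the weak sense, fix $w\notin L$ and any prover $P^{*}$. If $P^{*}$ reports the counters faithfully, $V$ simulates the true rejecting computation of $P_L$ and rejects with high probability; if $P^{*}$ ever misreports a counter value or a zero-test, then some adjacent pair of unary blocks fails to match and the probabilistic comparison catches it with high probability, or else the protocol simply never halts. Since neither case lets $w$ be accepted with more than small probability, condition~$2'$ is met, whereas members may fail to halt only with small probability, as a weak-IPS permits.

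The step I expect to be the main obstacle is the memoryless probabilistic verification of the counters. A single Freivalds comparison has two-sided error and exponential expected time, and a computation contains many such comparisons, so I must organize the audit---for instance by having $V$ privately select one counter and one adjacent pair of configurations to test per run and then repeating the whole protocol---so that the errors do not accumulate and a cheating prover is still exposed with constant probability. A second delicate point is that the number of coin tosses $64^{l}$ and the bit position $3l+3$ are themselves tied to the counter value $l=lex(w)$, so $V$ must lean on the prover-maintained counters both to drive the correct number of private tosses and to locate the relevant bit of the head count, all while those same counters remain under the consistency audit and the input head obeys the sweeping restriction; reconciling these constraints is where the construction needs the most care.
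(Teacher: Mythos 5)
Your high-level architecture matches the paper's: simulate the 1P4CA of Theorem~\ref{thm:four-counter-every-language}, keep the finite control and the private $\coinL$ tosses inside $V$, delegate the four unbounded counters to the prover over a two-way channel, and accept non-halting on cheating provers as the price of weakness (condition $2'$). The gap is exactly at the point you flag as the main obstacle, and your proposed patch does not close it. Auditing ``one privately selected adjacent pair per run'' cannot work for a constant-space verifier: the transcript has $g-1$ adjacent pairs with $g$ unbounded (on the order of $64^{l}$, $l=lex(w)$), a PFA cannot sample uniformly from an unbounded set, and any realizable sampling rule (e.g.\ geometric) gives the prover a step whose audit probability is negligible. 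A cheat planted there survives the audit with probability close to $1$, after which $V$ completes the corrupted simulation and accepts; repetition does not help because the erroneous acceptance already occurs in the un-audited runs. Separately, the comparison primitive you invoke is the wrong one: the fingerprinting of \cite{Fre83} needs $O(\log)$ space to hold the prime and the residues (it is what the paper uses for the \emph{log-space} verifier of Theorem~\ref{thm:all-unary}), and even the constant-space Freivalds equality test has constant per-comparison error, which accumulates fatally over $g$ independent comparisons.

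What the paper does instead, following \cite{CL89}, is aggregate \emph{all} comparisons multiplicatively rather than sampling one. For each counter $j$, $V$ emits ``$A$'' with probability $\prod_{i} y^{2s_{i,j}+2(s_{i+1,j}-f_{i,j})}$ and ``$R$'' with probability $\prod_{i} \frac{y^{4s_{i,j}}+y^{4(s_{i+1,j}-f_{i,j})}}{2}$; by AM--GM these coincide when every pair matches, while a single mismatch anywhere forces $Pr[R]/Pr[A] > \frac{1}{2y^{2}}$. Accepting with probability $Pr[A]$, rejecting with probability $y\cdot Pr[R]$, and restarting otherwise then makes the detection probability a \emph{constant factor} of the decision probability, independently of where in the unbounded transcript the cheat occurs. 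Both quantities are astronomically small per run, which is exactly what the restart loop absorbs. Without this (or an equivalent) aggregation mechanism, your soundness argument for cheating provers does not go through.
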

\begin{proof}
For any language $L$, $V$ simulates the algorithm of 1P4CA $P_L$ described in the proof of Theorem \ref{thm:four-counter-every-language} on the given input $w$ and asks the prover to store the contents of four counters. 

  For each step of $ P_L $, $V$ interacts with the prover. First, $V$ asks from the prover the values of four counters as $a^{s_1} b^{s_2} c^{s_3} d^{s_4} e$, where $s_j$ is the value of $j$-th counter ($ 1 \leq j \leq 4 $). Second, $ V $ implements the transition of $ P_L $ based on the current state, the symbol under the input head, the probabilistic outcome, and the status of the counters. Then, $V$ updates the state and head position by itself and sends the updates on the values of the counters to the prover as  $f_1 f_2 f_3 f_4$. Here $ f_j \in \{-1,0,1\} $ and it means that the verifier asks from the prover to add $f_j $ to the value of the $j$-th counter, where $ 1 \leq j \leq 4 $.

The communication between the verifier and the prover is two-way since $ P_L $ may toss $ \coinL $ during its computation. Without loss of generality, we pick an arbitrary computation path of $ P_L $. For this path,  let 
\[
	c_V = f_{1,1} f_{1,2} f_{1,3} f_{1,4} ~ f_{2,1} f_{2,2} f_{2,3} f_{2,4} ~ \cdots ~ f_{i,1} f_{i,2} f_{i,3} f_{i,4} ~ \cdots  
\] be the string representing the messages sent by the verifier and let 
\[
	c_P = a^{s_{1,1}} b^{s_{1,2}} c^{s_{1,3}} d^{s_{1,4}} ~ e ~ a^{s_{2,1}} b^{s_{2,2}} c^{s_{2,3}} d^{s_{2,4}} ~ e ~ \cdots ~ e ~ a^{s_{i,1}} b^{s_{i,2}} c^{s_{i,3}} d^{s_{i,4}} ~ e ~ \cdots
\] 
be the string representing the messages sent by the prover. The verifier $ V $ can check the validity of $ c_P $ as described below. For each $ i $, $ V $ can compare $ s_{i,1} $, $ s_{i,2} $, $ s_{i,3} $, and $ s_{i,4} $ with $ s_{i+1,1} $, $ s_{i+1,2} $, $ s_{i+1,3} $, and $  s_{i+1,4}$, respectively, by using the values $ f_{i,1} f_{i,2} f_{i,3} f_{i,4} $. 

Let $y < \frac{1}{2}$ be the parameter to determine the error bound in the following checks. For the validity check of $c_P$, $V$ makes four comparisons in parallel such that one comparison is responsible for one counter. During the $j$-th comparison ($1 \leq j \leq 4$), $V$ creates two paths with equal probabilities. In the first path, $V$ says ``$A$'' with probability, denoted by $ Pr[A_j] $, 
\[
	Pr[A_j] = \prod_{i=1}^{g-1} y^{2s_{i,j}+2(s_{i+1,j} - f_{i,j} )} ,
\]
 in the second path, it says ``$R$'' with probability, denoted by $ Pr[R_j] $, 
\[
	Pr[R_j] = \prod_{i=1}^{g-1} \dfrac{y^{4s_{i,j}}+y^{4(s_{i+1,j} - f_{i,j})}}{2},
\]
where $g$ is the total number of computational steps of $ P_L $. Here each comparison executes two parallel procedures such that the first procedure produces the probabilities for odd $i$'s and the second procedure produces the probabilities for even $i$'s.

Once the simulation of $ P_L $ is finished, $ V $ says only ``A''s in all comparisons with probability 
\[
	Pr[A] = Pr[A_1] \cdot Pr[A_2] \cdot Pr[A_3] \cdot Pr[A_4] 
\]
and 
$ V $ says only ``R''s in all comparisons with probability 
\[
	Pr[R] = Pr[R_1] \cdot Pr[R_2] \cdot Pr[R_3] \cdot Pr[R_4]. 
\]

It is easy to see that if $ s_{i,j} = ( s_{i+1,j} - f_{i,j} )$ for each $i$ and $j$, then $ Pr[A] = Pr[R] = \prod_{i=1}^{g-1} y^{4(s_{i,1} + s_{i,2} + s_{i,3} + s_{i,4})} $. On the other hand, if $ s_{i,j} \neq ( s_{i+1,j} - f_{i,j} ) $ for some $i$ and $j$, then 
\[
	\frac{Pr[R]}{Pr[A]} \geq \frac{y^{2s_{i,j}-2(s_{i+1,j} - {f_{i,j}})}}{2} + \frac{y^{2(s_{i+1,j} - {f_{i,j}})-2s_{i,j}}}{2} > \frac{1}{2y^2}
\]
since either $ (2s_{i,j}-2(s_{i+1,j} - {f_{i,j}})) $ or $ (2(s_{i+1,j} - {f_{i,j}})-2s_{i,j}) $ is a negative even integer.

If $c_P$ is finite (a cheating prover may provide an infinite $ c_P $), then $ V $ gives a positive decision for $c_P$ with probability $ Pr[A] $ and negative decision for $c_P$ with probability $ (y \cdot Pr[R]) $. Hence, if $ c_P $ is valid, then the probability of positive decision is $ \frac{1}{y} $ times of the probability of negative decision. 
If $ c_P $ is not valid, then the probability of negative decision is at least $ \frac{1}{2y} $ times of the probability of positive decision.

When the computation is finished, if $V$ does not give a decision on $c_P$, $V$ moves input head on the left end-marker and restarts the process of computation and interaction. If $V$ gives a negative decision on $c_P$, the input is rejected. If $V$ gives a positive decision on $c_P$, the input is accepted if $x_l$ is computed as 1, and the input is rejected if $x_l$ is computed as 0.

If the prover is honest, then $Pr[A] = Pr[R]$. If $w \in L$, the probability to accept the input is at least $Pr[A] \cdot \frac{3}{4}$ and the probability to reject the input is at most $Pr[A] \cdot \frac{1}{4} + y \cdot Pr[R]$. Therefore, the total probability to accept the input is at least
\[
\frac{Pr[A] \cdot \frac{3}{4}}{Pr[A] \cdot \frac{3}{4} + Pr[A] \cdot \frac{1}{4} + y \cdot Pr[A]} = \frac{3}{4 \cdot (1+y)},
\]
which can be arbitrarily close to $\frac{3}{4}$ by picking sufficient value $y$. If $w \notin L$ and the prover is honest, then the probability to accept the input is at most $Pr[A] \cdot \frac{1}{4}$ and the probability to reject the input is at least $Pr[A] \cdot \frac{3}{4} + y \cdot Pr[R]$. Therefore, the total probability to reject the input is at least
\[
\frac{Pr[R] \cdot \frac{3}{4} + y \cdot Pr[R]}{Pr[R] \cdot \frac{3}{4} + y \cdot Pr[R] + Pr[R] \cdot \frac{1}{4}} = \frac{3 + 4y}{4 + 4y} > \frac{3}{4}.
\]

If the prover is cheating and provides a finite $c_P$, the probability to reject the input is at least $ \frac{1}{2y} $ times of the probability to accept the input. Therefore, the total probability to reject the input is at least
\[
\frac{\frac{1}{2y}}{1 + \frac{1}{2y}} = \frac{1}{2y + 1},
\]
which can be arbitrarily close to 1 by picking sufficient value $y$. A cheating prover may provide an infinite $c_P$, in which case $V$ does not stop. 

$V$ can execute the algorithm of calculation of value $x_l$ multiple times in each interaction and in case of positive decision for $c_P$ choose the most frequent outcome for $x_l$ as the decision for recognition, thus, increase the probability of correct decision arbitrarily close to 1.
\qed
\end{proof}

Now we switch our focus to the IPSs with two provers. It is known that a 1PFA verifier can simulate the work tape of a TM reliably with high probability by interacting with two provers \cite{FS89}. Since we use this fact later, we present its explicit proof here.

\begin{fact}
	\label{fact:tape-simulation}
	\cite{FS89}
	A 1PFA verifier $V$ can simulate the work tape of a TM reliably with high probability by interacting with provers $P_1$ and $P_2$.
\end{fact}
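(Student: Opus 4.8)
The plan is to let $V$ carry the finite control of the target Turing machine $M$ inside its own (finite) state set, while delegating the unbounded work tape to the provers. At each simulated step $V$ holds some TM state $q$ and must learn the symbol currently scanned by $M$'s work-tape head; it obtains this symbol from $P_1$, applies $M$'s transition to update $q$, and then instructs $P_1$ which symbol to write and in which direction to move the head. Since $V$ scans the real input one-way and has no work tape, the entire burden of remembering the tape contents is pushed onto the provers, and the only property $V$ must enforce is that every symbol reported on a read agrees with the symbol most recently written to that cell (or the blank $\#$ if the cell has never been visited).

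The first step is to pin down what an \emph{honest} transcript looks like. The messages exchanged form a step-by-step log of triples (read symbol, written symbol, head move), and such a log faithfully represents a real computation of $M$ exactly when, for every work-tape cell, the symbol returned on each visit equals the symbol written on the previous visit to that cell. Equivalently, the head movements partition time into crossing events at each boundary between adjacent cells, and the transcript is faithful iff the crossing information matches on the two sides of every boundary. An honest $P_1$ can obviously supply such a log, which gives completeness; this is the only direction where we \emph{need} the prover to behave, so it is immediate.

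For soundness I would exploit the two provers as independent copies that can be cross-examined. Because $V$'s coins are private and $P_1,P_2$ cannot communicate, $V$ can at a randomly chosen moment freeze the symbol it has just written into the current cell and then, guided by the provers, track the head until it next returns to that cell, comparing the symbol each prover reports for that return. If the provers are maintaining a genuinely consistent tape, the two reports agree and the audit passes; if they are jointly simulating an \emph{inconsistent} tape, then on any fixed locally-legal-looking transcript there is some cell whose reported return value differs from its stored value, and a randomly placed audit catches this discrepancy with at least a constant probability. Running many independent audits, and (as in the weak-IPS construction of Theorem~\ref{thm:weak-verify}) restarting the whole interaction, then drives the detection probability arbitrarily close to $1$.

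The main obstacle is carrying out this soundness bookkeeping under the finite-state restriction: $V$ cannot store cell indices or time stamps, so the audit must be arranged so that all unbounded counting is performed by the provers while remaining locally checkable by $V$, and one must argue that no coordinated prover strategy can simultaneously look legal at every single step and hide a global read/write inconsistency from the randomized audit. Proving that a surviving transcript is, with high probability, a faithful rendering of $M$'s work tape is the heart of the argument; once that is in hand, plugging it into the counter-based simulation of Theorem~\ref{thm:four-counter-every-language} and amplifying the error are routine.
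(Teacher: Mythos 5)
There is a genuine gap, and it sits exactly where you flag ``the heart of the argument'' and then leave it unproven. Your soundness audit asks $V$ to freeze the symbol just written into the current cell and to recognize the moment the head next returns to that cell. Recognizing that return requires tracking the head's net displacement from the frozen cell, which is an unbounded counter; a 1PFA verifier has no means to maintain it. Your proposed fix --- have the provers do the unbounded counting while $V$ checks it ``locally'' --- is circular: the quantity being counted is under the control of the very parties being audited, so colluding provers simply declare the return to occur at a visit whose reported symbol matches what $V$ remembers, and the audit never fires. Relatedly, you treat the two provers as independently cross-examinable copies, but in the multi-prover model they may both be cheating according to a strategy agreed upon in advance; they can answer any audit whose timing or content they can predict with a mutually consistent fake transcript. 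Nothing in the proposal shows that a transcript surviving all local checks is, with non-negligible probability over $V$'s coins, globally read/write consistent, and that claim is the entire content of the Fact.

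The paper's proof closes this gap with an information-theoretic authentication scheme rather than an audit. The verifier secretly picks $a$, $b$, $r_0$ modulo a prime $q$ and stores each tape cell as a triple $(m_i, r_i, s_i)$ with a fresh random $r_i$ and a chained signature $s_i = (m_i \cdot a + r_i \cdot b + r_{i-1}) \bmod q$; odd-indexed cells go to $P_1$ and even-indexed cells to $P_2$ precisely so that no single prover ever holds two consecutive signatures, which would jointly determine $(a,b)$. Since exactly $q$ of the $q^2$ pairs $(a,b)$ are consistent with what a prover sees, any altered triple passes the signature check with probability only $\frac{1}{q}$, and $V$ verifies signatures by constant-size arithmetic modulo $q$ while scanning left to right --- no counting, no timing, and no reliance on the provers' independence beyond the split of the cells. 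You would need to replace your audit mechanism with something of this kind (or prove the missing claim that a randomly placed, prover-guided audit catches every globally inconsistent strategy with constant probability, which as stated it does not) before the rest of your outline can go through.
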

\begin{proof}
	Let $m$ denote the contents of the infinite to the right work tape including the position of head, where for each $i>0$, $m_i$ denotes $i$-th symbol from the left on work tape. To store the position of head one of $m_i$'s is marked. We accomplish this by doubling the alphabet of work tape, so one symbol can simultaneously store the value and marker of presence of head. 

At the beginning of computation, $V$ secretly picks random values $a$, $b$, and $r_0$, where each of them is between 0 and $q-1$ and $q$ is a predetermined prime number greater than the alphabet of the work tape. The verifier $V$ interacts with provers $P_1$ and $P_2$ and asks them to store $m$ in the following way: for each odd (resp., even) $i$, $V$ asks $P_1$ (resp., $P_2$) to store $m_i$, $r_i$, $s_i$, where $0 \leq r_i \leq q-1$ is picked by $V$ randomly, and $s_i = (m_i \cdot a + r_i \cdot b + r_{i-1}) \mod q$ is a signature. Therefore, each prover stores sequence of triples ($m_i$, $r_i$, $s_i$) for $i$ in ascending order.

To read the contents from the work tape, $V$ scans $m$ from the left to the right, i.e., for each $i$, $ V $ requests from the provers to send the triples ($m_i$, $r_i$, $s_i$). While $V$ scans the input, it checks the correctness of signatures. If $V$ finds a defect, it rejects the input and also identifies the prover giving the incorrect signature as a cheater. 

In order to update the content of the work tape, $V$ picks new random $r_0$, scans the input, for each triple ($m_i$, $r_i$, $s_i$), generates new $r_i$, recalculates $s_i$, and asks the provers to replace the values. If $m_i$ is changed for some $i$, then the values of triple are changed accordingly. Note that one update may include the change of one $m_i$ and the change of the position of work head by one cell, in which case at most 2 sequential $m_i$'s are updated. In this case, $V$ asks the provers to update the contents of the corresponding triples.

The provers cannot learn the values of $a$ and $b$ from the information provided by $V$, since, for each $i>0$, $V$ sends $m_i$ that is derived from the computation, random value $r_i$, and value $s_i$, which has one-to-one correspondence with $r_{i-1}$, which was also chosen randomly. Note that $r_0$ is also randomly chosen and kept in secret.

Let $(m_i, r_i, s_i)$ be the triple provided by $V$ to one of the provers, say $P$, for some $i$, and $(m'_i, r'_i, s'_i)$ be the values provided by $P$, where at least one of the values is different from the one sent by $V$. Assume that $s'_i$ corresponds to $m'_i$ and $r'_i$ correctly. Therefore, $s'_i = (m'_i \cdot a + r'_i \cdot b + r_{i-1}) \mod q$. Then, we have the following relation between $a$ and $b$: $(s'_i - s_i) = ( (m'_i - m_i) \cdot a + (r'_i - r_i) \cdot b + r_{i-1}) \mod q$. Since $q$ is a prime number, exactly $q$ pairs of $(a,b)$'s satisfy this equation, and there are total $q^2$ different pairs of $(a,b)$'s. Thus, since $P$ does not know the values of $a$ and $b$, the probability that $P$ can provide valid values is $\frac{1}{q}$. 

	If both provers are honest, then they send the correct stored values to $V$. This implies that all signature checks will be passed successfully and $V$ accepts the interactions. If a prover changes at least one symbol of the contents that $V$ entrusted to store, the signature check will fail for the triple of changed symbol with probability at least $\frac{q-1}{q}$, in which case $V$ rejects the interaction and identifies the cheating prover. Note that the described protocol works equally well in both IPS models with two provers, because any case of cheating is recognized individually. 
\qed
\end{proof}

Now we can present our next result about the verification of every language.

\begin{theorem}
	\label{thm:two-prover-every-language}
	There is an IPS with two provers ($P_1,P_2,V$) for any language where $ V $ is a 1PFA.
\end{theorem}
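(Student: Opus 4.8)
The plan is to compose the one-way four-counter recognizer $P_L$ of Theorem~\ref{thm:four-counter-every-language} with the reliable two-prover tape simulation of Fact~\ref{fact:tape-simulation}. The starting observation is that a one-way four-counter machine is just a one-way probabilistic Turing machine whose work tape happens to hold only four counter values; since $P_L$ makes a single left-to-right pass over $w$ (it computes $lex(w)$ during the pass and then stays on the right end-marker for the whole coin-tossing phase) and its only source of randomness is $\coinL$, I would first regard it as a one-way PTM $M_L$ with one two-way work tape that uses $\coinL$ and recognizes $L$ with error at most $\frac{1}{4}$. The verifier $V$ will keep the finite control of $M_L$ inside its own states, toss $\coinL$ exactly when $M_L$ does, and entrust the entire work tape to the two provers $P_1,P_2$ precisely as in Fact~\ref{fact:tape-simulation}.

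For each simulated step, $V$ requests from the appropriate prover the signed triple holding the symbol scanned by $M_L$'s work head, verifies its signature, reads the symbol currently under its (stationary) input head, tosses $\coinL$ if the transition requires it, and then rewrites the cell and moves the work head through the provers with freshly generated signatures. The crucial point for the statement is that every traversal of the simulated work tape is carried out over the two communication channels, so the physical input head of $V$ never moves left: it advances only while $M_L$ consumes $w$ and afterwards rests on $\dollar$. Hence $V$ is genuinely a $1$PFA that reads its input once.

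Completeness and soundness then reduce to those of $M_L$ together with the reliability of Fact~\ref{fact:tape-simulation}. If $w\in L$ and both provers store the tape honestly, $V$ reproduces a faithful run of $M_L$, which accepts with probability at least $\frac{3}{4}$ by Fact~\ref{fact:DY16A}. If $w\notin L$ and the tape is stored honestly, the same faithful run rejects with probability at least $\frac{3}{4}$; if instead some entrusted symbol is altered, the deviation is a value the prover must commit to and is checked on the spot, so it is exposed with probability at least $\frac{q-1}{q}$ for the predetermined prime $q$. Because all randomness lives inside $V$ and the provers merely maintain the tape, a dishonest prover can only corrupt cells, which both the Multi-Prover and the Noisy-Oracle variants of Fact~\ref{fact:tape-simulation} defeat. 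A decisive difference from the weak-IPS of Theorem~\ref{thm:weak-verify} is that $M_L$ always halts after a fixed finite computation (form $lex(w)$, build $64^{lex(w)}$, toss $\coinL$ that many times, and decide), so $V$ always reaches a verdict and we obtain a genuine, not merely weak, IPS.

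The step I expect to be the main obstacle is the error bookkeeping, because the computation of $M_L$ is exponentially long while $V$ has only constant space and a fixed $q$. The resolution I would make explicit is that a cheating prover cannot amortise its forgeries: every value it returns is checked immediately, so deviations cannot be retried and any decision-flipping corruption is caught with probability essentially $\frac{q-1}{q}$, keeping the undetected-cheating probability bounded by a constant that shrinks with $q$. To drive the overall error below an arbitrary $\epsilon$ without violating the one-way restriction, I would not re-read the input but instead preserve $lex(w)$ on the simulated tape and repeat only the coin-tossing experiment many times, taking the majority verdict; this simultaneously amplifies the $\frac{1}{4}$ recognition error and the residual tape-cheating error while $V$ still scans its input exactly once.
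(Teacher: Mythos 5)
Your proposal is correct and matches the paper's proof in all essentials: both delegate the unbounded memory of a $\coinL$-based recognizer for $L$ to the two provers via the signed-triple tape simulation of Fact~\ref{fact:tape-simulation}, keep the physical input head one-way, argue soundness from the $\frac{q-1}{q}$ detection probability, and amplify by repeating only the coin-tossing phase. The one inessential difference is that you simulate the already one-way 1P4CA of Theorem~\ref{thm:four-counter-every-language} directly, whereas the paper simulates the recognizer behind Corollary~\ref{cor:all-k-ary-exp} and restores one-wayness by first copying the input onto the prover-held work tape.
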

\begin{proof}
	A 2PFA verifier, say $V'$, can execute the algorithm for the Corollary \ref{cor:all-k-ary-exp} by interacting with two provers as described in the proof of Fact \ref{fact:tape-simulation} such that the provers reliably store the contents of the counters for coin tosses and processing the number of heads. If the content of the work tape is changed by a prover, $V'$ can catch this with probability at least $\frac{q-1}{q}$, and so rejects the input with the same probability. 

	Let $ w \in \Sigma^* $ be the given input and let $ x_l $ represent its membership bit in $ p_L $. If both provers are honest, then $V'$ correctly performs $64^l$ tosses of $\coinL$ and then processes the number of heads. If the bit $x_l$ (($3l+3$)-th bit in the number of heads) is guessed as 1, the input is accepted, and, if $x_l$ is guessed as 0, then the input is rejected. Therefore, the input is verified correctly with probability $  \frac{3}{4} $. If at least one of the provers is not honest, then any change of the contents stored by the prover is detected with probability at least $\frac{q-1}{q}$, and the input is rejected. This is true if either one or both provers are cheating.

	We can modify $ V' $ and obtain 1PFA verifier $V$. At the beginning of computation, $V$ reads the input from left to right and writes it on the work tape (asks the provers to store it). Then, $V$ implements the rest of the protocol by staying on the right end-marker.
    
    After writing the input on work tape, like in Theorem \ref{thm:weak-verify}, $V$ can execute the algorithm of calculation of value $x_l$ multiple times and increase the probability of correct decision on recognition arbitrarily close to 1.
\qed
\end{proof}

\section{Verification of uncountably many languages}
\label{sec:verification-uncountable}

\subsection{Constant-space verification of nonregular languages}
\label{sec:languages}

In this subsection, we present two nonregular unary languages and one nonregular binary language that can be verified by 2PFAs in quadratic and linear time, respectively. The protocols presented here will be also used in the next subsection.
\begin{theorem}
	\label{thm:usquare}
	$\usqr = \{a^{m^2} \mid  m>0 \}$ is verifiable by a 2PFA in quadratic expected time with bounded error.
\end{theorem}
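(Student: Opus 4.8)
The plan is to reduce membership in $\usqr$ to the verification of a block decomposition of the input that certifies $n=m^2$ through the identity $m^2 = 1+3+5+\cdots+(2m-1)$. On input $a^n$, the honest prover, synchronized with a single left-to-right sweep of the verifier's head, announces through the communication channel the right boundaries of consecutive blocks $B_1,B_2,\ldots,B_k$ of $a^n$, claiming $|B_i| = 2i-1$. If such a decomposition exists and exactly covers the input, then $n = \sum_{i=1}^{k} (2i-1) = k^2$ is a perfect square; conversely every perfect square admits exactly one such decomposition. Thus the verifier only has to confirm three facts: (i) $|B_1| = 1$; (ii) $|B_{i+1}| = |B_i| + 2$ for every $i<k$; and (iii) the last block ends precisely on $\dollar$ (exact cover). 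Facts (i) and (iii) are local and can be checked deterministically during the sweep, so the whole difficulty is (ii).

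First I would implement the increment test (ii). Since the verifier is a 2PFA it can neither write on the input nor store a block length, so I would compare the lengths of two adjacent blocks probabilistically, reusing the coin-comparison primitive already employed in the proof of Theorem~\ref{thm:weak-verify}: for a marked segment of length $\ell$ one can generate a probability such as $y^{2\ell}$ by tossing a biased coin once per symbol while sweeping the segment, and the quantities $y^{2|B_i|}$ and $y^{2(|B_{i+1}|-2)}$ coincide exactly when $|B_{i+1}| = |B_i|+2$ and otherwise differ by a multiplicative factor bounded away from $1$ (the same $\frac{1}{2y^2}$ bias obtained there from the inequality $\frac{y^{2t}+y^{-2t}}{2}>1$ for nonzero even $t$). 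Running these comparisons for all adjacent pairs \emph{in parallel}, so that a single global ``accept'' branch and a single global ``reject'' branch are produced exactly as $Pr[A]$ and $Pr[R]$ are formed in Theorem~\ref{thm:weak-verify}, keeps the gap between the two probabilities from shrinking with the number of blocks, which is what lets the overall error stay bounded independently of $m$.

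Next I would treat soundness against a cheating prover. A dishonest prover may announce a decomposition whose relations are locally off, or may refuse to terminate by sending infinitely many boundaries. The parallel comparison above guarantees that if even one relation $|B_{i+1}| = |B_i|+2$ fails, then the reject branch dominates the accept branch by a fixed factor, so after the usual normalization between the accept and reject weights (as in Theorem~\ref{thm:weak-verify}) the non-member is rejected with constant probability; to prevent a prover from stalling I would superimpose a halting walk in the spirit of Theorem~\ref{thm:all-unary}, so that with high probability the computation terminates and an unboundedly long interaction is rejected. The quadratic expected running time then falls out of the design: one verification sweep together with the per-symbol comparison walks over the $O(m)$ blocks whose lengths sum to $n$ costs $O(n)$, and the repetitions needed to force halting and to amplify the constant error to the required bound contribute the remaining factor, yielding $O(n^2)$ expected time.

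The main obstacle I expect is (ii) combined with the demand that the error remain \emph{bounded} while $\Theta(m)$ distinct length relations are certified by a constant-memory verifier. A naive sequential check of the $m-1$ increments would let the error accumulate with $m$; the resolution is to fold all comparisons into a single pair of global accept/reject probabilities and to show that any violated relation injects an $m$-independent multiplicative bias into that pair. Making the exact-cover condition (iii) and the first-block condition (i) mesh with this probabilistic test, and ensuring that the halting walk does not spoil completeness on genuine squares, are the remaining points that will need care.
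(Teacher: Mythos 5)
Your decomposition $m^2 = 1+3+\cdots+(2m-1)$ is a perfectly good alternative to the paper's $(a^m b)^m b$ certificate, and checks (i) and (iii) are unproblematic. The gap is in how you implement check (ii). You propose to certify the $\Theta(m)$ relations $|B_{i+1}|=|B_i|+2$ with the coin-comparison primitive of Theorem~\ref{thm:weak-verify}, folding everything into one global accept branch of probability $Pr[A]$ and one reject branch of probability $Pr[R]$. The multiplicative gap $\frac{1}{2y^2}$ is indeed independent of $m$, but both $Pr[A]$ and $Pr[R]$ are of order $y^{\Theta(\sum_i |B_i|)} = y^{\Theta(n)}$, i.e.\ exponentially small in $n$. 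On a single pass the verifier therefore reaches \emph{no} decision with probability $1-y^{\Theta(n)}$, and you are forced to choose among: restarting (which costs $y^{-\Theta(n)}$ expected rounds, so the expected running time is exponential, not quadratic); defaulting to accept or reject (which destroys soundness or completeness); or looping (which yields only a weak IPS). None of these is compatible with the statement, which demands a genuine bounded-error verifier running in quadratic \emph{expected} time. Your time analysis implicitly assumes only polynomially many repetitions are needed, which is exactly where the argument breaks. A secondary issue: the ``halting walk'' of Theorem~\ref{thm:all-unary} uses a coin of bias $(1/64)^k$ maintained with a logarithmic-size counter, which a 2PFA cannot implement; the constant-space substitute is a random walk on the input that hits the right end-marker with probability $1/n$.

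The repair is to make check (ii) deterministic rather than probabilistic: a two-way head can compare adjacent block lengths exactly by walking right across $B_i$ and then left across $B_{i+1}$ (offset by two cells for your odd-length scheme), using the input tape itself as a unary counter. This is what the paper does --- it compares $m_{2j-1}$ with $m_{2j}$ in one path and $m_{2j}$ with $m_{2j+1}$ in another, reserves two further equiprobable paths for the total-length and block-count checks, and uses the $1/n$-probability random walk between comparisons solely to defeat a prover who streams an infinite periodic certificate. With deterministic comparisons each path costs $O(n)$ per comparison, at most $O(\sqrt{n})$ comparisons occur, and the only source of error is the early termination of the random walk, which is bounded away from $\frac{1}{2}$ by the estimate $1-(1-\frac{1}{n})^{\sqrt{n}/2} \leq \frac{1}{4}$. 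If you keep your odd-block decomposition but switch to head-movement comparisons and the random-walk termination device, your argument goes through with essentially the paper's analysis.
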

\begin{proof}
	The protocol is one-way and the verifier expects from the prover a string of the form
\[
	(a^{m}b)^m b
\] 
for the members of the language, where $ m>0 $.

Let $ w=a^n $ be the given input for $ n>3 $ (the decisions on the shorter strings are given deterministically) and let $ y $ be the string provided by the prover. The verifier deterministically checks whether $ y $ is of the form
\[
	y = a^{m_1}ba^{m_2}b \cdots b a^{m_i}b \cdots
\mbox{ or }
	y = a^{m_1}ba^{m_2}b \cdots b a^{m_t}bb
\]
for some $ t>0 $. If the verifier sees a defect on $y$, then the input is rejected.

In the remaining part, we assume that $ y $ is in one of these forms. At the beginning of the computation, the verifier places the input head on the left end-marker and splits computation in four paths with equal probabilities. 

In the first path, the verifier reads $w$ and $y$ in parallel and checks whether $ y $ is finite, i.e.
\[
	y = a^{m_1}ba^{m_2}b \cdots b a^{m_t}b b,
\]
and whether it satisfies the equality $ n=\sum_{j=1}^t m_j $, where $ t>1 $. If one of the checks fails, the input is rejected. Otherwise, it is accepted. 

The second path is very similar to the first path and the following equality is checked:
\[
	n = \sum_{j=2}^t m_j + \sum_{j=1}^t 1,
\]
i.e., the verifier skips $ a^{m_1} $ from $ y $ and counts $ b $'s instead. If the equality is satisfied, the input is accepted. Otherwise, it is rejected.

The computation in the first and second paths is deterministic (a single decision is given in each) and  both paths terminate in linear time.

In the third path, the verifier tries to make the following consecutive comparisons:
\[
	m_1 = m_2, m_3 = m_4, \ldots, m_{2j-1} = m_{2j}, \ldots .
\] 
For each $j$, the verifier can easily determine whether $ m_{2j-1} = m_{2j} < n $ by attempting to move the input head to the right by $m_{2j-1}$ squares and then to the left by $m_{2j}$ squares. If the right end-marker is visited ($ m_{2j-1} \geq n $), or the left end-marker is visited earlier than expected ($ m_{2j}> m_{2j-1} $) or is not visited ($ m_{2j} < m_{2j-1} $), then the comparison is not successful and so the input is rejected. Otherwise, the comparison is successful and the verifier continues with a random walk (described below) before the next comparison, except that if the last comparison is successful, then the input is accepted without making the random walk. 

The aim of the random walk is to determine whether the prover sends a finite string or not, i.e. the prover may cheat by sending the infinite string $  (a^mb)^* $ for some $ m < n $ which passes successfully all comparison tests described above.

The random walk starts by placing the input head on the first symbol of the input and terminates after hitting one of the end-markers. During the random walk, the verifier pauses to read the string $ y $. It is a well known fact that this walk terminates in $ O(n) $ expected number of steps and the probability of ending on the right (resp., the left) end-marker is $ \frac{1}{n} $ (resp., $ 1- \frac{1}{n} $). 

If the walk ends on the left end-marker, then the verifier continues with the next comparison. If the walk ends on the right end-marker, the verifier checks whether the number of $ a $'s in the remaining part of $ y $ is less than $ n $ or not by reading whole input from right to left. If it is less than $ n $, then the input is accepted. Otherwise ($ y $ contains more than $ n $ $ a $'s), the input is rejected. In any case, the computation is terminated with probability $ \frac{1}{n} $ after  the walk.

The fourth path is identical to the third path by shifting the comparing pairs: the verifier tries to make the following consecutive comparisons:
\[
	m_2 = m_3, m_4 = m_5, \ldots, m_{2j} = m_{2j+1}, \ldots .
\] 

Now, we can analyze the overall protocol. If $ n = m^2 $ for some $ m>1 $, then the prover provides $ y = (a^mb)^mb $ and the input is accepted in every path and so the overall accepting probability is 1. Thus, every member is accepted with probability 1. Moreover there will be at most $ m $ random walks and so the overall running time is $ O(n\sqrt{n}) $.

If the input is not a member, then the input is rejected in at least one of the paths. If it is rejected in the first or second path, then the overall rejecting probability is at least $ \frac{1}{4} $. If it is rejected in the third or fourth paths, then the overall rejecting probability cannot be less than $ \frac{3}{16} $ as explained below.

We assume that the input is not rejected in the first and second paths. Then, we know that $ y $ is finite, the number of $a$'s in $ y $ is $n$, and $ y $ is composed by $ m_1 $ blocks. Since $ n $ is not a perfect square, there is at least one pair of consecutive blocks that have different number of $ a $'s. Hence, at least one of the comparisons will not be successful, and, the input will be rejected in one of these paths. Let $l$ be the minimum index such that the comparison of the $ l $-th pair is not successful (in the third or fourth path). Then, $ l < \frac{\sqrt{n}}{2} $. (If not, $ y $ contains at least $ 2 \left\lceil \frac{\sqrt{n}}{2} \right\rceil $ blocks and each of these blocks contains $ m_1 = \left\lceil \sqrt{n} ~ \right\rceil  $ $ a $'s, and, this implies that $ y $ contains more than $ n $ $a$'s.) Then, the maximum accepting probability in the corresponding path is bounded from above by
\[
	 \sum_{i=1}^l \frac{1}{n} \left( 1- \frac{1}{n} \right)^{i-1} = 1- \left( 1- \frac{1}{n} \right)^l < 1 - \left( 1- \frac{1}{n} \right)^{\frac{\sqrt{n}}{2}} \leq \frac{1}{4}.
\]
(Remember that $ n>3 $.)
Therefore, the rejecting probability in the third or fourth path is at least $ \frac{3}{4} $, and so, the overall rejecting probability cannot be less than $ \frac{1}{4} \cdot \frac{3}{4} = \frac{3}{16} $.

The maximum (expected) running time occurs when the prover sends the infinite $ y = (a^mb)^* $ for some $ m>0 $. In this case, the protocol is terminated in the third and fourth paths with probability 1 after $ O(n) $ random walks, and so, the expected running time is quadratic in $ n $, $O(n^2)$.

By repeating the protocol above many times, say $ r>1 $, we obtain a new protocol such that any non-member is rejected with probability arbitrarily close to 1, i.e. $ 1 - \left(1 -  \frac{3}{16} \right)^r $. 
\qed
\end{proof}
\begin{theorem}
	\label{thm:upower64}
	$ \upower  = \{ a^{64^m} \mid m > 0 \} $ is verifiable by a 2PFA with bounded error in quadratic expected time.
\end{theorem}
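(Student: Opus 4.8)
The plan is to mimic the verification of $\usqr$ from Theorem~\ref{thm:usquare}, replacing its \emph{equality} comparisons between adjacent blocks with \emph{ratio-$64$} comparisons. For a member $a^n$ with $n=64^m$, I would ask the honest prover to send the chain of powers
\[
	y = a^{64^1} b \, a^{64^2} b \cdots b \, a^{64^m} b b ,
\]
so that the first block has length $64$, the last block has length $n$, and every block is $64$ times the previous one. As before, the verifier first performs a deterministic syntactic check that $y$ has the shape $a^{k_1} b a^{k_2} b \cdots$ (infinite) or $a^{k_1} b \cdots a^{k_t} b b$ (finite), rejecting on any defect, and then splits the computation into four equiprobable paths.

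The crux---and the only genuinely new ingredient relative to $\usqr$---is a subroutine that checks $k_{i+1} = 64\,k_i$ for two consecutive blocks using the length-$n$ input as a ruler. Starting with the input head on $\cent$, for each $a$ read from the earlier block $a^{k_i}$ the verifier advances the input head $64$ cells to the right, and then, for each $a$ read from the later block $a^{k_{i+1}}$, it retracts the head one cell to the left; the comparison succeeds exactly when the head returns to $\cent$, and fails (reject) if $\dollar$ is reached prematurely while reading the earlier block (which signals $64k_i>n$), if $\cent$ is reached before the later block is exhausted ($k_{i+1}>64k_i$), or if the head stops strictly to the right of $\cent$ ($k_{i+1}<64k_i$). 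Each block is consumed exactly once, so this is compatible with a one-way proof, and for the top comparison $(k_{m-1},k_m)$ the rightward sweep lands on $\dollar$ exactly as block $m-1$ ends because $64\cdot 64^{m-1}=n$; the boundary case $64k_i=n$ must be distinguished from a true overflow, which is the main place where care is needed.

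The four paths are then organized exactly as for $\usqr$. Two of them are the parity-split comparison paths: one chains the ratio-$64$ checks on the pairs $(k_1,k_2),(k_3,k_4),\dots$ and the other on $(k_2,k_3),(k_4,k_5),\dots$, inserting between consecutive successful comparisons the same $O(n)$-expected random walk that ends on $\dollar$ with probability $\frac1n$; when a walk reaches $\dollar$ the verifier rejects iff the remaining portion of $y$ still contains at least $n$ symbols $a$, which is precisely the mechanism that terminates and rejects a bounded infinite cheat such as $(a^{c}b\,a^{64c}b)^{*}$. The other two paths are deterministic boundary checks: that $k_1=64$, and that $y$ is finite with last block $k_t=n$; to guarantee these paths also halt with probability $1$ on every (possibly infinite) proof, the verifier additionally rejects as soon as the total number of $a$'s read exceeds $2n$ (using the ruler together with a constant number of sweeps), which is consistent with membership since the honest chain contains $\sum_{i=1}^m 64^i<\frac{64}{63}n<2n$ symbols $a$.

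For completeness, if $n=64^m$ the honest chain passes every check, so the input is accepted with probability $1$; the number of random walks is $O(m)=O(\log n)$ and each proof has length $O(n)$, so members run in $O(n\log n)$ expected time, while the worst case over all provers (a bounded infinite cheat forcing $\Theta(n)$ random walks of expected length $\Theta(n)$ in a comparison path) is $O(n^2)$, matching the claimed quadratic bound. For soundness, if $n$ is not a power of $64$ then no finite chain can simultaneously satisfy $k_1=64$, all ratio-$64$ relations, and $k_t=n$; hence every proof is rejected in at least one path---by a boundary path if $y$ is infinite or has $k_1\neq64$ or $k_t\neq n$, and otherwise by a parity path because some consecutive ratio must differ from $64$. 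The probability bookkeeping is identical to that of Theorem~\ref{thm:usquare}: a failure detected in the deterministic paths rejects with probability at least $\frac14$, and a failure that first shows up in a comparison path rejects with probability at least $\frac14\cdot\frac34=\frac{3}{16}$ via the same geometric-series estimate; repeating the whole protocol a constant number of times amplifies the rejection probability for non-members arbitrarily close to $1$. The step I expect to be the main obstacle is making the ratio-$64$ comparison robust at the ruler's right boundary---correctly separating the legitimate exact hit $64k_i=n$ at the top of the chain from a genuine overflow---together with ensuring that the deterministic boundary paths are forced to halt on bounded infinite proofs.
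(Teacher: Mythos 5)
Your architecture is essentially the paper's: a geometric chain of $a$-blocks certified by ratio-$64$ comparisons against the input ruler, split over two parity paths interleaved with the probability-$\frac{1}{n}$ random walks, plus deterministic checks pinning down the endpoints of the chain; the first-failed-index bound $l = O(\log n)$, the geometric-series estimate, and the amplification are all as in the paper. However, two of your concrete choices create real problems that the paper's version is designed to avoid. First, your honest certificate $a^{64}b\,a^{64^2}b\cdots b\,a^{64^m}bb$ contains $\sum_{i=1}^{m}64^i=\frac{64(n-1)}{63}>n$ symbols $a$, so for all sufficiently large members the part of the certificate remaining after the first successful comparison still holds more than $n$ symbols $a$; your rule ``when a walk reaches the right end-marker, reject iff the remainder has at least $n$ symbols $a$'' therefore rejects the \emph{honest} prover with positive probability, contradicting your claimed acceptance with probability $1$. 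The paper's chain runs from $64^0$ up to $64^{m-1}=n/64$, so its total is $\frac{n-1}{63}<n$ and the remainder test never fires against the honest prover; this choice also makes the largest rightward excursion of a comparison only $n/64$, eliminating the $64k_i=n$ boundary case that you single out as the main obstacle.

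Second, your deterministic path checking ``$y$ is finite and $k_t=n$'' must measure every block against the ruler (which block is last is only revealed by the terminating $bb$), forcing a head reset to $\cent$ after each block; on an adversarial infinite certificate with short blocks, such as $(ab)^{\omega}$, this path never halts, and your proposed fix---reject once more than $2n$ symbols $a$ have been read---needs the same single input head as a counter, so it cannot run concurrently with the per-block measurement. The idea your proposal is missing is the paper's replacement of both endpoint checks by the single relation $n=1+63\sum_{j}m_j$ (with $m_1=1$ enforced syntactically): it is verified in one monotone left-to-right sweep advancing $63$ cells per certificate symbol $a$, halts in linear time on every certificate whether finite or not, and together with the ratio checks forces $n=64^t$. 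Your proof is repairable along these lines, but as written both perfect completeness and the halting of the boundary path fail.
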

\begin{proof}
	The proof is very similar to the proof given for $ \usqr $. The protocol is one-way and the verifier expects to receive the string
	\[
		y_n = a b a^{64} b \cdots b a^{64^{k-2}} b a^{64^{k-1}} bb
 	\]
	from the prover for some $ k > 0 $.
	
	Let $ w = a^n $ be the given input for $ n>64 $. (The decisions on the shorter strings are given deterministically.) Let $ y $  be the string sent by the prover. The verifier deterministically checks whether $ y $ is of the form
	\[
		 y = a^{m_1} b a^{m_2} b \cdots b a^{m_i} b \cdots  
	\mbox{ or }
		y = a^{m_1} b a^{m_2} b \cdots b a^{m_t} bb
	\]
	for some $ t>0 $, where $ m_1=1 $. If the verifier sees a defect on $y$, the input is rejected. So, we assume that $ y $ is in one of these forms in the remaining part. 
	
	The verifier splits into three paths with equal probabilities at the beginning of the computation. The first path checks whether $ y $ is finite and
	\[
		n = 1 + 63 \cdot \sum_{j=1}^t m_j.
	\]
	If not, the input is rejected (because $\sum_{j=0}^{k-1} 64^j = \frac{64^k-1}{63}$). Otherwise, the input is accepted.
    
    The second and third paths are very similar to the third and fourth paths in the proof for $ \usqr $. In the second path, the verifier checks 
	\[
		64 \cdot m_{2j-1} = m_{2j}
	\]
	for each $ j>0 $. The random walk part is implemented in the same way. The third path is the same except the comparing pairs: the verifier checks 
	\[
		64 \cdot m_{2j} = m_{2j+1}
	\]
	for each $ j>0 $.
	
	If $ w = a^{64^k} $, then the honest prover sends
	\[
		y_n = a b a^{64} b \cdots b a^{64^{k-1}} bb
	\]
	and the input is accepted in all paths. 
	
	If $ w $ is not a member, then the input is rejected in at least one of the paths. If it is rejected in the first path, then the overall rejecting probability is at least $ \frac{1}{3} $. If the input is accepted in the first path, then the rejecting probability in the second or third path is at least $ \frac{99}{100} $, and so the overall rejecting probability is greater than $ \frac{33}{100} $.
	
	We can use the analysis given in the previous proof. Let $ l $ be the minimum index such that the comparison of $ l $-th pair is not successful (in the second or third path). Then, $ l < \frac{\log_{64} n}{2} = \frac{\log n}{12} $. (If not, $ y $ starts with \[ a b a^{64} b \cdots b a^{64^{2l-3}} b a^{64^{2l-2}} b a^{64^{2l-1}}  \] where $ l > \frac{\log_{64} n}{2} $, and so, $ y $ contains $ \frac{64^{2l}-1}{63} $ $ a $'s, which is greater than $ \frac{n-1}{63} $.) The maximum accepting probability in the corresponding path can be bounded from above by
	\[
		\sum_{i=1}^l \frac{1}{n} \left( 1- \frac{1}{n} \right)^{i-1} = 1- \left( 1- \frac{1}{n} \right)^l < 1 - \left( 1- \frac{1}{n} \right)^{\frac{\log n}{12}} < \frac{1}{100}.
	\]
	(Remember that $ n > 64 $.) Therefore, the rejecting probability in the second or third  path is greater than $ \frac{99}{100} $.
	
	The maximum expected running time is quadratic in $ n $, when the prover sends the infinite $ y = a b a^{64} b \cdots b a^{64^i} b a^{64^{i+1}} b \cdots $. By repeating the protocol many times, we obtain a protocol with better success probability.
\qed
\end{proof}

We continue with the verification of a binary nonregular language, a modified version of $ \tt DIMA $ \cite{DY16A}: 
$
	\dimatwo = \{ 0^{2^0}10^{2^1}10^{2^2}1  \cdots 1 0^{2^{3k-1}}11 (0^{2^{3k}}1)^{2^{3k}} \mid k > 0 \}.
$
\begin{theorem}
	\label{thm:dima2}
	$\dimatwo$ is verifiable by a  sweeping PFA in linear time with bounded error.
\end{theorem}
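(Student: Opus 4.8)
The plan is to give a one-way, sweeping protocol in the spirit of the proofs for $\usqr$ (Theorem~\ref{thm:usquare}) and $\upower$ (Theorem~\ref{thm:upower64}), but reorganized so that every test is performed during full left-to-right or right-to-left sweeps over a constant number of passes, which is what buys linear time. Write a candidate input as $F\,1\,S$, where $F = 0^{b_0}1\,0^{b_1}1 \cdots 0^{b_{3k-1}}1$ is the first part and $S = 0^{c_1}1\,0^{c_2}1 \cdots$ is the part after the extra ``$1$''. First I would let a deterministic (regular) component check the coarse shape during a sweep: the input is a sequence of zero-blocks each closed by a single $1$; there is exactly one occurrence of ``$11$'', which splits the string into $F$ and $S$; the very first block is $0^1$; and the number of blocks of $F$ is a positive multiple of $3$. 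Counting the $1$'s before ``$11$'' modulo $3$ is a finite-state task, so all of these are handled while sweeping, with rejection on any defect. It then remains to certify the two arithmetic relations that actually pin down $\dimatwo$: (i) the blocks of $F$ double, $b_j = 2b_{j-1}$ from $b_0=1$, which is equivalent to the prefix recurrence $b_j = 1 + \sum_{i<j} b_i$; and (ii) the blocks of $S$ are all equal, and their common length equals both their number and $1 + \sum_j b_j = 2^{3k}$.

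For (i) and (ii) I would split the computation into equiprobable branches, one family per relation, exactly as the three or four paths are used in the proofs above. In each branch the honest prover streams, \emph{synchronized} with the verifier's sweep, a unary ruler, and the verifier performs the relevant integer-ratio matching on the fly: for the doubling test it consumes the ruler at ratio $1{:}2$ against the zeros of the current block and insists that ruler and block terminate together, while for $S$ the equal-block and count-equals-length tests are the $\usqr$-style $1{:}1$ matchings applied inside $S$; the link $c_1 = 1 + \sum_j b_j$ is checked by matching one $S$-block against the zeros of all of $F$. The crucial simplification over $\usqr$ and $\upower$ is that \emph{no internal random walk is needed}: because the input is finite and each sweep has length exactly $|\tildew|$, an over-long or non-terminating certificate is caught for free when the head reaches $\dollar$ with the ruler not yet exhausted (or vice versa), so finiteness of the certificate is enforced by the sweep itself and the whole protocol runs in $O(n)$ time. (As an aside, the design forces $S$ to contain $2^{3k}\cdot 2^{3k}=64^{k}$ zeros, which is precisely the clock length needed to toss $\coinL$ in the next subsection.)

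The main obstacle is soundness against a cheating prover under the sweeping restriction: the verifier has no work tape, may turn only at the end-markers, and hence can neither maintain a counter nor run the random walks that drive Theorems~\ref{thm:usquare} and~\ref{thm:upower64}. The key point to establish is that a single supplied ruler cannot simultaneously be validated and used in one pass, so I would arrange the branches so that one branch matches block $B_{j+1}$ against the prover's ruler $R_j$ at ratio $1{:}2$ while another matches $R_j$ against the true block $B_j$ at ratio $1{:}1$ (and likewise for the $S$-tests); if any equality in (i)–(ii) fails, the prover cannot make both matchings succeed, so the branch chosen at random catches the first offending pair with constant probability, exactly as the ``first failing comparison'' argument bounds the acceptance probability in the earlier proofs. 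With the honest prover every matching is exact and the input is accepted with probability $1$ (perfect completeness), while every cheating prover is rejected with some fixed constant probability; standard repetition of the constant number of linear sweeps then amplifies this gap to any desired error bound without affecting the $O(n)$ time bound.
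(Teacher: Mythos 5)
Your protocol is essentially the paper's: the prover supplies a (block-wise) copy of the input, one equiprobable branch validates that copy $1{:}1$ against the input while the other branches use it, shifted by one block, to check the doubling relation, the equal-length relation in $S$, and the count-equals-length test, giving perfect completeness and a constant rejection probability that is amplified by repetition. The paper simply asks the prover for $y=w$ and checks this wholesale in the first of three paths, but the idea is the same, including your correct observation that no random walk is needed because comparing the certificate against the finite input during a sweep already enforces its finiteness and yields linear time.
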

\begin{proof}
	The protocol is one-way and the verifier expects to receive the same input from the prover. 
	
		Let $ w $ be the given input of the form
	\[
		w = 0^{t_1} 1 0^{t_2} 1 \cdots 1 0^{t_m} 11 0^{t'_1} 1 0^{t'_2} 1 \cdots 1 0^{t'_{m'}} 1,
	\]
	where $ t_1 = 1 $, $ m $ and $m'$ are positive integers, $m$ is divisible by 3, and $ t_i,t'_j>0 $ for $ 1 \leq i \leq m $ and $ 1 \leq j \leq m' $. (Otherwise, the input is rejected deterministically.)
	
	Let $ y $ be the string provided by the prover. When on the left end-marker, the verifier splits into three paths with equal probabilities. In the first path, it checks whether $ w = y $. If not, the input is rejected. Otherwise, the input is accepted. 
	
	In the following part, we assume that $ y = w $. In the second path, the verifier checks whether each 0-block has double length of the previous 0-block before symbols ``11'' and each 0-block has the same length of the previous 0-block after symbols ``11'': When reading $ w $ and $ y $ in parallel, the verifier makes the following comparisons:
	\begin{itemize}
		\item for each $ i \in \{1,\ldots,m-1 \} $, whether $ 2t_i = t_{i+1} $,
		\item whether $ 2 t_m = t'_1 $, and,
		\item for each $ j \in \{1,\ldots,m'-1\} $, whether $ t'_j = t'_{j+1} $.
	\end{itemize}
	If one of the comparisons is not successful, then the input is rejected. If all of them are successful, then we know that the input is of the form
	\[
		w = 0^{2^0}10^{2^1}10^{2^2}1 \cdots 1 0^{2^{3k-1}} 11 (0^{2^{3k}} 1)^{m'}.
	\]
	In the third path, the verifier simply checks whether $ m'=2^{3k} $ or not, i.e. the verifier compares $ t'_1 $ from $ y $ with the number of 1's appearing after ``11'' in $ w $. If not, the input is rejected. Otherwise, it is accepted.
	
	If $ w $ is a member, then the honest prover sends $ y=w $, and the verifier accepts the input in all paths with probability 1.
	
	If $ w $ is not a member, then the verifier rejects the input in one of the paths. If the prover sends $ y \neq w $, then the input is rejected in the first path. If $ y=w $ and then one of the comparisons in the second path may not be successful and so the input is rejected in this path. If all of them are successful, then the comparison in the third path cannot be successful and the input is rejected in this path.
	
	The overall protocol terminates in linear time. By repeating the protocol many times, we obtain a protocol with better success probability.
\qed
\end{proof}

\subsection{Constant-space verification of uncountably many languages}
\label{sec:verifications}
In this subsection, we present two constant-space protocols for verifying uncountably many unary and binary languages.

\begin{theorem}
	\label{thm:uncoutable-sweeping}
    Bounded-error sweeping PFAs can verify uncountably many languages in linear time.
\end{theorem}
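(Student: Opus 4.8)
The plan is to realize every $ I \in \mathcal{I} $ as a sublanguage of $\dimatwo$ and to verify membership by composing the structural test of Theorem~\ref{thm:dima2} with a coin-based bit extraction driven by $\coinI$. Write $ \dimatwo = \{ w_k \mid k>0 \} $, where $ w_k $ is the unique member with parameter $ k $, and define
\[
	L_I = \{ w_k \mid k \in I \}.
\]
Since $ k \mapsto w_k $ is a bijection between $ \mathbb{Z}^+ $ and $\dimatwo$, the map $ I \mapsto L_I $ is injective, so $ \{ L_I \mid I \in \mathcal{I} \} $ is uncountable (cardinality $ \aleph_1 $). Every $ L_I $ is contained in $\dimatwo$, hence any input outside $\dimatwo$ is automatically a non-member, which lets the structural test do most of the soundness work.

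First I would let the verifier request a copy of the input from the prover and run the sweeping protocol of Theorem~\ref{thm:dima2} to test $ w \in \dimatwo $, rejecting on any detected defect. This costs linear time and rejects every $ w \notin \dimatwo $ with probability bounded away from $ \tfrac12 $ against an arbitrary prover. When $ w = w_k $, the problem reduces to deciding whether $ k \in I $, i.e. whether the bit $ x_k $ of $ p_I $ equals $ 1 $. The design feature that makes this affordable is that the tail $ (0^{2^{3k}}1)^{2^{3k}} $ of $ w_k $ contains exactly $ 2^{3k}\cdot 2^{3k}=64^k $ zeros, so the verifier can toss $\coinI$ once per $ 0 $ read in the tail during a single left-to-right sweep, obtaining precisely $ 64^k $ tosses in linear time. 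By Fact~\ref{fact:DY16A}, the $ (3k+3) $-th bit of the binary number of heads equals $ x_k $ with probability at least $ \tfrac34 $, and the verifier accepts iff this bit is $ 1 $.

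To weld the two tests together I would, exactly as in the proofs of Theorem~\ref{thm:usquare} and Theorem~\ref{thm:upower64}, split the initial configuration into equiprobable paths, some devoted to the structural checks (which can only reject) and one to the coin-based extraction (which decides acceptance). For a member $ w_k $ with $ k \in I $, the honest prover returns the exact copy, every structural path passes, and the coin path accepts with probability $ \geq \tfrac34 $; for a non-member the input is either rejected in a structural path (if $ w \notin \dimatwo $) or, when $ w=w_k $ with $ k\notin I $, rejected by the coin path with probability $ \geq \tfrac34 $ independently of the prover, since the private coin outcomes are inaccessible to it. Standard repetition drives both error probabilities below any fixed $ \epsilon<\tfrac12 $, and the running time stays linear in $ n=|w| $ because the structural test and each coin sweep are linear.

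The hard part is the bit extraction itself: a constant-space verifier can store neither the head count $ H $, which may be as large as $ 64^k $, nor the index $ k $, yet it must recover a fixed bit of $ H $. The plan is to let the two-level geometry of $ w_k $—an outer run of $ 2^{3k} $ blocks, each an inner run of $ 2^{3k} $ zeros—supply the counting capacity that a finite-state device lacks. Writing $ H = q\cdot 2^{3k}+r $ with $ 0\le r<2^{3k} $, the wanted bit of $ H $ is a fixed low-order bit of $ q=\lfloor H/2^{3k}\rfloor $, so only $ q $ modulo a constant has to be tracked, which is genuinely finite-state information; the inner runs furnish the $ 2^{3k} $-sized capacity needed to accumulate heads toward each carry, while the outer blocks index the carries, so that the verifier "knows" the scale $ 2^{3k} $ from the input geometry alone. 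Making this synchronization precise—detecting each completed group of $ 2^{3k} $ heads with only bounded error, and doing so within a constant number of linear sweeps without corrupting the time bound—is the step I expect to demand the most care, and it is where the explicit construction will have to be spelled out in full.
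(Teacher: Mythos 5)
Your setup matches the paper's: the same family $\dimatwo(I)=\{w_k \mid k\in I\}$, the same use of the $\dimatwo$ protocol as a structural filter, and the same appeal to Fact~\ref{fact:DY16A} for reading $x_k$ off the $(3k+3)$-th bit of the head count. The gap is exactly the step you flag at the end and leave unresolved: how a constant-space sweeping verifier both performs $64^k$ tosses of $\coinI$ and tallies the heads in groups of $2^{3k}=8^k$. In your scheme the tosses are clocked by the input head (one toss per $0$ in the tail), so the position of the input head records the number of \emph{tosses}; there is then no unbounded resource left to record the number of \emph{heads}, which is the quantity you must partition into blocks of $8^k$ to extract bit $3k+3$. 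A single head on a read-only tape cannot serve as both counters at once, and your fallback of detecting each completed group of $2^{3k}$ heads ``with only bounded error'' is neither developed nor necessary.

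The paper resolves this by bringing the prover back in for the extraction phase. The verifier asks the prover to send $y=0^{64^k}1$ and splits into two equiprobable paths. In the first path it checks that $y=0^m1$ with $m=64^k$ exactly, by matching the prover's $0$'s against the $64^k$ zeros of the tail $(0^{2^{3k}}1)^{2^{3k}}$ of $w_k$. In the second path the \emph{prover's symbols} clock the coin tosses (one toss of $\coinI$ per symbol received), while the \emph{input head} advances by one tail symbol per \emph{head}; since the tail consists of $8^k$ blocks of $8^k$ zeros separated by $1$'s, each block boundary marks exactly $8^k$ heads, so a mod-$8$ counter $j$ in the finite control yields bit $3k+3$ of the head count exactly ($x_k$ is guessed as $1$ iff $j\in\{4,\ldots,7\}$), with rejection if the heads overflow $64^k$. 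This decoupling---prover message as toss clock, input tail as head counter, plus a separate path certifying the clock's length---is the missing idea; without it your single-sweep, input-clocked extraction cannot be carried out in constant space.
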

\begin{proof}
	Let $ w_k $ be the $ k $-th shortest member of $ \dimatwo $ for $ k>0 $. For any $ I \in \mathcal{I} $, we define the following language:
	\[
	\dimatwo(I) = \{ w_k \mid k>0   \mbox{ and } k \in I \}.
	\]
	We describe a one-way protocol for $\dimatwo(I)$. Let $ w $ be the given input. The verifier determines whether $ w = w_k $ for some $ k>0 $ by using the protocol for $\dimatwo$ with high probability. If not, then the input is rejected. In the remaining part, we continue with 
	\[
		w=w_k=0^{2^0}10^{2^1}10^{2^2}1  \cdots 1 0^{2^{3k-1}}11 (0^{2^{3k}}1)^{2^{3k}}.
	\]	
	
	The verifier attempts to toss $\coinI$ $ 64^k $ times and in parallel processes the total number of heads for determining $ x_k $ in $ p_I $. The verifier asks from the prover to send $ 0^{64^k}1 $ and the prover sends $ y = 0^m1 $ (the input is deterministically rejected if $ y $ is not in this form). 
	
	The verifier splits into two paths with equal probabilities. In the first path, it easily determines whether $ m=64^k $ by passing over the input once (the number of 0's after symbols ``11'' is $ 64^k $). If $m \neq 64^k$, then the input is rejected. Otherwise, the input is accepted.  
	 
	 The second path is responsible for coin-tosses and processing the total number of heads. The verifier performs $ m $ coin-tosses. For counting the heads, the verifier uses the part of $ w_k $ after symbols ``11'' as a read-only counter, which is composed of $ 8^k $ blocks of 0's and length of each block is $ 8^k $. Let $ t $ be the total number of heads:
	 \[
	 	t = i \cdot 8^{k+1} + j \cdot 8^k + q = (8i+j) 8^k+q ,
	 \]
 	 where $ i \geq 0 $, $ j \in \{0,\ldots,7\}$, and $ q < 8^k $. 
	Due to Fact \ref{fact:DY16A}, $ x_k $ is the $ (3k+3) $-th digit of $ bin(t) $ with probability at least $ \frac{3}{4} $. In other words, $ x_k $ is guessed as 1 if $ j \in \{4,\ldots,7\} $, and as 0, otherwise. The verifier sets $ j=0 $ at the beginning. Then, for each head, it reads a symbol 0 from the input and after $ 8^k $ heads it updates $ j $ as $ (j+1) \mod 8 $. If the number of heads exceeds $ 64^k $, then the input is rejected. If not, the decision given is parallel to the value of $ j $: the input is accepted  if $ j \in \{4,\ldots,7\} $ and rejected if $ j \in \{0,\ldots,3\} $.
	
	The verifier operates in sweeping mode and each path terminates in linear time. If $ w $ is a member, then the input is accepted with probability at least $\frac{3}{4}$. If $ w \notin \dimatwo $, then it is rejected with high probability. If $ w \in \dimatwo $ but $ w \notin \dimatwo(I) $, then the input is rejected with probability at least $ \frac{1}{2} \cdot \frac{3}{4} = \frac{3}{8} $. By repeating the protocol, the rejecting probability can get arbitrarily close to 1.
	
	Since the cardinality of set $ \{ I \mid I \in \mathcal{I} \} $ is uncountable, there are uncountably many languages in $ \{ \dimatwo(I) \mid I \in \mathcal{I} \} $, each of which is verified by a bounded-error linear-time sweeping PFA.
\qed
\end{proof}

\begin{theorem}
	\label{thm:uncountable-unary}
	2PFAs can verify uncountably many unary languages with bounded error in quadratic expected time.
\end{theorem}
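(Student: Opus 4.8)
The plan is to mirror the construction of Theorem~\ref{thm:uncoutable-sweeping}, but to use $\upower$ as the base language, so that the length of each member coincides with the number of coin tosses prescribed by Fact~\ref{fact:DY16A}. For any $ I \in \mathcal{I} $ I would define
\[
	\upower(I) = \{ a^{64^k} \mid k>0 \mbox{ and } k \in I \},
\]
whose $k$-th shortest member $ a^{64^k} $ has length exactly $ 64^k $. Distinct $I$'s give distinct languages, so the family $ \{ \upower(I) \mid I \in \mathcal{I} \} $ is uncountable, and it remains to give, for each $ \upower(I) $, a single bounded-error $2$PFA verifier (with an honest prover) running in quadratic expected time.

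On input $ w $ the verifier would first check that $ w = a^{64^k} $ for some $ k>0 $ by running the $\upower$ protocol of Theorem~\ref{thm:upower64}, rejecting otherwise. In the remaining (honest) case $ w = a^{64^k} $ the verifier must decide whether $ k \in I $, which by Fact~\ref{fact:DY16A} amounts to tossing $ \coinI $ exactly $ 64^k $ times and reading the $ (3k+3) $-th bit $ x_k $ of the total number of heads. The number of tosses is controlled by the input itself: reading $ w $ once from left to right and tossing $ \coinI $ once per symbol yields precisely $ 64^k $ tosses. To count heads I borrow the block counter from the proof of Theorem~\ref{thm:uncoutable-sweeping}: since $ 64^k = (8^k)^2 $, I ask the prover to supply the $\usqr$-style string $ (a^{8^k}b)^{8^k}b $, i.e.\ $ 8^k $ blocks each of length $ 8^k $, and verify it exactly as in the proof of Theorem~\ref{thm:usquare} against the length $ |w| = 64^k $, which forces both the block size and the block count to equal $ \sqrt{|w|} = 8^k $. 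These verifications, together with the coin tossing, would be carried out in parallel paths split with equal probability, just as in Theorems~\ref{thm:usquare} and~\ref{thm:uncoutable-sweeping}.

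In the coin-tossing path the verifier advances its input head one cell per toss (giving the $ 64^k $ tosses) and, on every head, advances one symbol of the verified block string; each time a block boundary $ b $ is crossed it increments a counter $ j \bmod 8 $ kept in its internal states. After the $ 64^k $ tosses, $ j = \lfloor t/8^k \rfloor \bmod 8 $, where $ t $ is the number of heads, so that $ x_k = 1 $ iff $ j \in \{4,\ldots,7\} $, and the verifier accepts accordingly. All of this needs only constant memory, the input being read one-way and the certificate being a one-way read-only tape, so the verifier is a genuine $2$PFA. For timing, the $\upower$ and $\usqr$ verifications run in quadratic expected time because of their random walks, while the tossing phase takes $ O(64^k)=O(|w|) $ steps; hence the whole protocol is quadratic in $ n $.

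For correctness, a member $ a^{64^k} $ with $ k \in I $ passes every structural check against the honest prover and, by Fact~\ref{fact:DY16A}, produces $ x_k = 1 $ with probability at least $ \frac{3}{4} $. A non-member is caught in one of three ways: if $ w \notin \upower $ the Theorem~\ref{thm:upower64} path rejects with high probability; if $ w = a^{64^k} $ but the prover supplies a wrong block string, the $\usqr$ check of Theorem~\ref{thm:usquare} rejects with constant probability; and if all structural checks pass but $ k \notin I $, then $ x_k = 0 $ and the tossing path rejects with probability at least $ \frac{3}{4} $. The decisive point, and the step I expect to require the most care, is soundness in this last case: I must argue that \emph{no} prover can boost acceptance, using the fact that $ \coinI $ is fixed and outside the prover's control, so that once the independently verified block counter is correct the extracted bit equals $ x_k $ with the guaranteed probability regardless of the prover's messages. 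The usual multi-path bounded-error bookkeeping then keeps the overall rejection probability on non-members bounded below by a constant, and amplification by repeating the whole protocol a constant number of times drives every error below any desired $ \epsilon<\frac{1}{2} $. Since $ \mathcal{I} $ is uncountable, the family $ \{\upower(I)\mid I\in\mathcal{I}\} $ then furnishes uncountably many unary languages verified by bounded-error $2$PFAs in quadratic expected time.
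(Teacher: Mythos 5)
Your proposal matches the paper's proof essentially step for step: the same family $\upower(I)$, the same use of the $\upower$ protocol for base-language membership, the same request for the $\usqr$-style certificate $(a^{8^k}b)^{8^k}b$ verified against $|w|=64^k$, and the same block counter $j \bmod 8$ from the $\dimatwo(I)$ protocol to extract $x_k$ as the $(3k+3)$-th bit of the head count. The correctness and timing analysis (constant rejection probability for non-members via the two-path split, quadratic expected time from the random walks, amplification by repetition) is likewise the argument the paper gives.
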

\begin{proof}
	Here, we use all protocols given in the proofs of Theorems \ref{thm:usquare}, \ref{thm:upower64}, and \ref{thm:uncoutable-sweeping}.

	Let $ w_k $ be the $ k $-th shortest member of $ \upower $ for $ k>0 $. 	For any $ I \in \mathcal{I} $, we define language:
    \[
	\upower(I) = \{ w_k \mid k>0   \mbox{ and } k \in I \}.
	\]
	We construct a verifier for this language.

	Let $ w $ be the given input. By using the protocol given for $ \upower $, the verifier can determine whether $ w \in \upower $ or not. If $ w \notin \upower  $, then the input is rejected with high probability.
	
	In the following part, we assume that $ w = w_k $ for some $ k>0 $. The verifier asks from the prover the following string
	$
		y_k = (a^{8^k}b)^{8^k}b.
	$
	Let $ y $ be the string provided by the prover. The verifier splits into two paths with equal probabilities. In the first path, it checks whether $ y = y_k $ by using the protocol for $ \usqr $. In the second path, the verifier assumes that $ y =y_k $ and implements the part of the protocol for $ \dimatwo(I) $, which is responsible for the coin tosses and for determining whether $ k \in I $ or not by checking the total number of heads. The verifier reads $w$ for $64^k$ coin tosses and $y$ for determining the value $x_k$.
	
	If the prover sends $ y_k $, then the verifier correctly determines whether $ k \in I $ or not with probability at least $ \frac{3}{4} $ in the second path. If $ w \in  \upower(I) $, the honest prover sends $ y_k $ and so the input is accepted with probability 1 in the first path and accepted with probability at least $ \frac{3}{4} $ in the second path.
	
	If $ w \notin \upower(I) $, then the input is rejected with probability at least $ \frac{3}{16} $ if the prover does not send $ y_k $ in the first path, and rejected with probability at least $ \frac{3}{4} $ in the second path if the prover does send $ y_k $. Therefore, the overall rejecting probability is at least $ \frac{3}{32} $. 
	
	Since each called protocol runs no more than quadratic expected time in $ |w| $, the running time is quadratic. By repeating the protocol many times, we obtain a protocol with better success probability.
\qed
\end{proof}

\section*{Acknowledgments} Dimitrijevs is partially supported by University of Latvia projects \linebreak AAP2016/B032 ``Innovative information technologies'' and ZD2018/20546 ``For development of scientific activity of Faculty of Computing''. Yakary{\i}lmaz is partially supported by ERC Advanced Grant MQC.

\bibliographystyle{splncs03}
\bibliography{tcs}

\end{document}